\definecolor{orcidlogocol}{HTML}{A6CE39}
\tikzset{
  orcidlogo/.pic={
    \fill[orcidlogocol] svg{M256,128c0,70.7-57.3,128-128,128C57.3,256,0,198.7,0,128C0,57.3,57.3,0,128,0C198.7,0,256,57.3,256,128z};
    \fill[white] svg{M86.3,186.2H70.9V79.1h15.4v48.4V186.2z}
                 svg{M108.9,79.1h41.6c39.6,0,57,28.3,57,53.6c0,27.5-21.5,53.6-56.8,53.6h-41.8V79.1z M124.3,172.4h24.5c34.9,0,42.9-26.5,42.9-39.7c0-21.5-13.7-39.7-43.7-39.7h-23.7V172.4z}
                 svg{M88.7,56.8c0,5.5-4.5,10.1-10.1,10.1c-5.6,0-10.1-4.6-10.1-10.1c0-5.6,4.5-10.1,10.1-10.1C84.2,46.7,88.7,51.3,88.7,56.8z};
  }
}
\newcommand\orcidicon[1]{\href{https://orcid.org/#1}{\mbox{$^{\scalerel*{
\begin{tikzpicture}[yscale=-1,transform shape]
\pic{orcidlogo};
\end{tikzpicture}
}{+}}$}}}
\newcommand{\re}{{\rm Re}}
\newcommand{\im}{{\rm Im}}
\newcommand{\dd}{\mathrm{d}}
\newcommand{\sgn}{\mathrm{sgn}}
\newcommand{\erfi}{\mathrm{erfi}}
\newcommand{\erf}{\mathrm{erf}}
\newcommand{\nonumberoff}{}
\numberwithin{equation}{section}
\@citea\NAT@hyper@{%
     \NAT@nmfmt{\NAT@nm}%
     \hyper@natlinkbreak{\NAT@aysep\NAT@spacechar}{\@citeb\@extra@b@citeb}%
     \NAT@date}}
\@citea\NAT@nmfmt{\NAT@nm}%
\NAT@spacechar\NAT@hyper@{\NAT@date}}{}{}
\@citea\NAT@hyper@{%
     \NAT@nmfmt{\NAT@nm}%
     \hyper@natlinkbreak{\NAT@spacechar\NAT@@open\if*#1*\else#1\NAT@spacechar\fi}%
       {\@citeb\@extra@b@citeb}%
     \NAT@date}}
\@citea\NAT@nmfmt{\NAT@nm}%
\fi\NAT@hyper@{\NAT@date}}
\DeclarePairedDelimiter\abs{\lvert}{\rvert}%
\DeclarePairedDelimiter\norm{\lVert}{\rVert}%
\let\oldabs\abs
\def\abs{\@ifstar{\oldabs}{\oldabs*}}
\let\oldnorm\norm
\def\norm{\@ifstar{\oldnorm}{\oldnorm*}}
\title[Unified lensing and kinematic analysis]{Unified lensing and kinematic analysis for \textit{any} elliptical mass profile}
\author[A. J. Shajib]{
Anowar J. Shajib \orcidicon{0000-0002-5558-888X}$^{1}$\thanks{E-mail: ajshajib@astro.ucla.edu}
\\ \\
$^{1}$Department of Physics and Astronomy, University of California, Los Angeles, CA 90095-1547, USA \\}
\date{Accepted XXX. Received YYY; in original form ZZZ}
\begin{document}

\label{firstpage}
\pagerange{\pageref{firstpage}--\pageref{lastpage}}
\maketitle

\begin{abstract}
We demonstrate an efficient method to compute the strong-gravitational-lensing deflection angle and magnification for any elliptical surface-density profile. This method solves a numerical hurdle in lens modelling that has lacked a general solution for nearly three decades. The hurdle emerges because it is prohibitive to derive analytic expressions of the lensing quantities for most elliptical mass profiles. In our method, we first decompose an elliptical mass profile into Gaussian components. We introduce an integral transform that provides us with a fast and accurate algorithm for the Gaussian decomposition. We derive analytic expressions of the lensing quantities for a Gaussian component. As a result, we can compute these quantities for the total mass profile by adding up the contributions from the individual components. This lensing analysis self-consistently completes the kinematic description in terms of Gaussian components presented by \citet{Cappellari08}. Our method is general without extra computational burden unlike other methods currently in use. 
\end{abstract}

\begin{keywords}
gravitational lensing: strong -- galaxies: kinematics and dynamics -- methods: analytical -- methods: numerical -- methods: data analysis
\end{keywords}



\section{Introduction: precise computation made easy}

Gravitational lensing (hereafter, lensing) has versatile applications in astrophysics and cosmology. Lensing is the effect when light bends while passing by a massive object. If two galaxies sit along the same line-of-sight of an observer, the background galaxy appears multiple times due to lensing. This system is called a galaxy-scale strong-lensing system (hereafter, lens). Lenses are useful to measure the Hubble constant $H_0$, dark matter subhalo mass function, dust characteristics in galaxies, mass of super-massive black holes, stellar initial mass function, etc. \citep[e.g.,][]{Falco99, Peng06, Suyu13, Vegetti14, Schechter14}.

%

In several applications, stellar kinematics play a complementary role to lensing. Lensing-only observables suffer from mass-sheet degeneracy (MSD) -- if we appropriately rescale a mass profile after adding an infinite mass-sheet on top of it, then all the lensing observables stay invariant except the time delay \citep{Falco85, Schneider14}. When a measured quantity depends on the mass profile, the MSD adds uncertainty to the measurement. Kinematics help break this degeneracy. Lensing observables probe the projected mass; kinematic observables probe the three-dimensional potential. Thus, the lensing--kinematics combination tightly constrains the mass profile, and enables us to robustly measure astrophysical and cosmological quantities \citep[e.g.,][]{Treu04, Barnabe11}.

In practice, it is difficult to compute lensing and kinematic quantities for elliptical mass profiles, which are required to describe the most common kind of lenses, i.e., elliptical galaxies.
We need to integrate over density profiles while fitting a model to either the lensing or the kinematic data. For example, in lensing, the deflection angle is an integral over the surface density profile; in kinematics, the line-of-sight velocity dispersion is a double integral over three-dimensional mass and light profiles. For most elliptical density profiles, we are unable to express these integrals with elementary or special functions. Special functions are numerically well-studied, hence fast algorithms to compute them are usually available. Thus, these functions can be numerically convenient for evaluating model-predicted observables in large numbers (e.g., $\sim10^6$) when sampling the lens model posterior with Markov chain Monte Carlo (MCMC) methods, or when only searching for the best-fit model even. Otherwise, it would be inefficient to numerically compute lensing and kinematic integrals for general elliptical profiles.

Usually, the numerical difficulties are circumvented through simplifying assumptions or approximations. We now outline common assumptions and approximations in kinematic and lensing analyses, noting that these can limit the accuracy and precision of the inferences.

Either axisymmetry or spherical symmetry is usually assumed for kinematic analysis. If we start with a surface density profile for lensing analysis, we need to deproject this profile along the line of sight to compute the kinematics. This deprojection has an infinite degeneracy \citep{Contopoulos56}. Therefore, it is necessary to choose a line-of-sight symmetry when deprojecting, for which either axisymmetry or spherical symmetry is often a convenient choice.

\begin{figure}
	\includegraphics[width=0.5\textwidth]{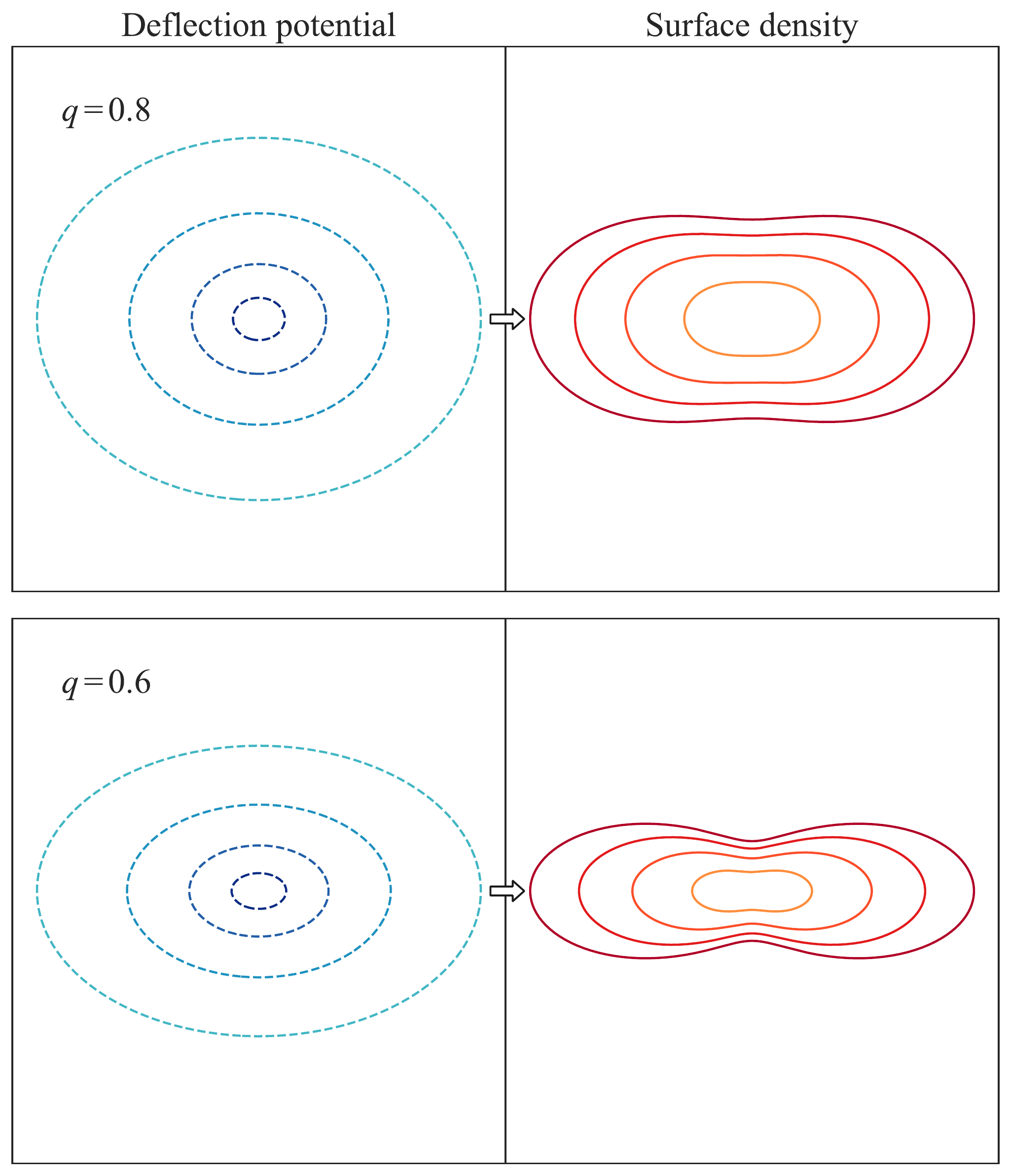}
	\caption{Elliptical deflection potential (left column) producing dumbbell-shaped surface-density (right column). The dashed contours in the left column are isopotential curves for S\'ersic profile with $n_{\rm \text{S\'ersic}}=4$, and the solid contours in the right column are corresponding isodensity curves. The axis ratios are $q=0.8$ in the top row, and $q=0.6$ in the bottom row. The dumbbell shape in surface density is unphysical and it gets more pronounced for higher ellipticity in the deflection potential. Hence, we can not use elliptical deflection potential to simplify lensing analysis of moderately elliptical galaxies. We need to treat ellipticity in the surface density, not deflection potential, to make our lensing analysis generally consistent with our physical priors.
	\label{fig:dumbbell_convergence}
	}
\end{figure}

In lensing analysis, spherical symmetry is rarely sufficient and we need to consider ellipticity to achieve the required precision. All the lensing quantities are related to the deflection potential or its derivatives. The time delay depends on the deflection potential difference. The gradient of the deflection potential gives the deflection angle. The Hessian of the deflection potential relates to the surface density, the shear, and the magnification. To efficiently compute these quantities for the elliptical case, we can find the following three approximations in the literature:
\begin{enumerate}[leftmargin=15pt,topsep=5pt, itemsep=5pt]
	\item \textit{Ellipticity in deflection potential:} The gradient and the Hessian of an elliptical deflection potential can be easily computed through numerical differentiation \citep[e.g.,][]{Kovner87b, Golse02}. However, this solution is not general, as the surface density becomes dumbbell-shaped for an elliptical deflection potential with axis ratio $q \lesssim 0.6$ (Fig. \ref{fig:dumbbell_convergence}, \citealt{Kassiola93}). This oddly shaped surface-density is unphysical.
	\item \textit{Elliptical power-law profile:} We can efficiently compute the lensing quantities for the elliptical power-law profile using numerical approximation or analytical expressions (for the isothermal case, \citealt{Kormann94}; and for the general case, \citealt{Barkana98}; \citealt{Tessore15}). This profile can be sufficient to use in statistical studies that do not require detailed modelling of individual lenses \citep[e.g.,][]{Koopmans09, Sonnenfeld13}. However, adopting a power-law profile artificially breaks the MSD and it could potentially bias the $H_0$ measurement \citep[e.g.,][]{Schneider13, Sonnenfeld18c}. Therefore, we need to explore different, physically-motivated mass models, such as a \textit{composite model} that explicitly accounts for the luminous and the dark components \citep{Suyu14, Yildirim19}.
	\item \textit{Chameleon profile:} The S\'ersic profile well describes the surface brightness of a galaxy \citep{Sersic68}. The Chameleon profile approximates the S\'ersic profile within a few per cent \citep{Dutton11}. We can efficiently compute lensing quantities for the Chameleon profile using analytic expressions. However, this profile only describes the baryonic component. The precise lensing analysis of elliptical Navarro--Frenk--White (NFW) profile for the dark component still lacks a general solution \citep{Navarro97}.
\end{enumerate}
In a nutshell, these approximations for elliptical lensing analysis are only applicable in restricted regimes as described above.

In this paper, we present a general method to precisely compute the gradient and the Hessian of the deflection potential for any elliptical surface-density profile. The method follows a ``divide and conquer'' strategy. We can approximately divide, or decompose, an elliptical profile into Gaussian components \citep[e.g.,][]{Bendinelli91}. For this Gaussian decomposition, we devise a fast and accurate algorithm by introducing an integral transform. For each Gaussian component, we derive analytic expressions of the gradient and the Hessian of the deflection potential. For the deprojected Gaussian component, \citet{Cappellari08} derives the line-of-sight velocity dispersion. We can combine the computed quantities from each Gaussian component back together to obtain these quantities for the total density profile. In this way, the lensing and kinematic descriptions are self-consistently unified. At the same time, this method is general, as we can apply it to lensing with any elliptical surface-density and to kinematics with either axisymmetry or spherical symmetry. Our method is more efficient than numerical integration to compute lensing quantities for an elliptical surface-density profile.

We organize this paper as follows. In Section \ref{sec:divide_and_conquer}, we motivate the ``divide and conquer'' strategy behind our method and introduce an integral transform that provides a fast algorithm for decomposing any elliptical surface-density profile into Gaussian components. In Section \ref{sec:gaussian_kinematics}, we summarize the kinematic description of the Gaussian components from \citet{Cappellari08}. In Section \ref{sec:lensing}, we derive the gradient and the Hessian of the deflection potential for elliptical Gaussian surface-density. Next in Section \ref{sec:proof_of_concept}, we demonstrate a proof-of-concept for our method using simulated data. Then, we summarize the paper in Section \ref{sec:conclusion}. Additionally in Appendix \ref{app:theorems}, we prove some fundamental theorems for the integral transform introduced in Section \ref{sec:divide_and_conquer}. This integral transform with a Gaussian kernel is the continuous case of Gaussian decomposition. The theorems in Appendix \ref{app:theorems} establish the existence, uniqueness, and invertibility of this transform.

\section{Decomposing an elliptical profile into Gaussian components} \label{sec:divide_and_conquer}

\begin{figure*}
	\includegraphics[width=\textwidth]{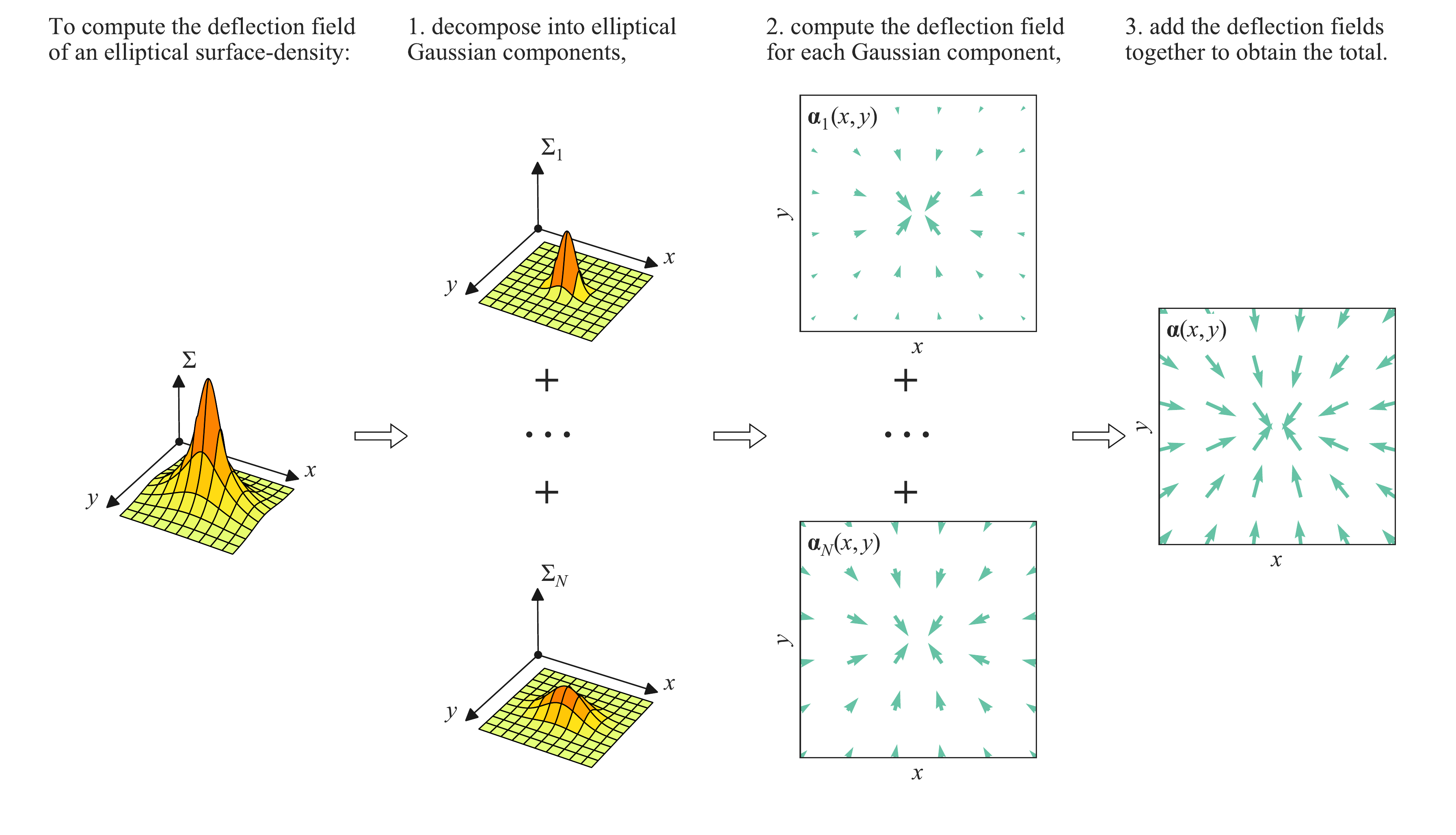}
	\caption{``Divide and conquer'' strategy to compute lensing quantities for elliptical surface-density profile. In this figure, we choose the deflection field as the quantity of interest to illustrate the method. However, this method works equally well for other quantities such as the lensing shear and the line-of-sight velocity dispersion. Each column demonstrates one step in the strategy and the arrows show the progression of these steps. We explain each step in the text at the top of the corresponding column.
	\label{fig:divide_conquer}
	}
\end{figure*}

To make the computation tractable, we aim to decompose an elliptical surface-density profile into simpler functions. This function should be simple enough so that we can both 
\begin{enumerate}[leftmargin=20pt,topsep=0pt]
	\item express the deflection angle in terms of elementary or special functions, and
	\item easily deproject it into three-dimension and compute the enclosed mass for kinematic analysis.
\end{enumerate}
The Gaussian function meets both of these criteria. We validate criterion (i) in Section \ref{sec:gaussian_lensing}, where we express the deflection angle for an elliptical Gaussian surface-density profile with the complex error function. For criterion (ii), deprojecting a two-dimensional Gaussian into three-dimension is straightforward, as the Abel inversion of a two-dimensional Gaussian is a three-dimensional Gaussian. The enclosed mass for a three-dimensional Gaussian has the form of the error function, which we can efficiently compute without integrating numerically. Given these points, we approximately decompose an elliptical surface-density profile as
\begin{equation} \nonumberoff
	\Sigma(x, y) \approx \sum_{j=0}^J {\Sigma_0}_{j} \exp\left(-\frac{q^2x^2+y^2}{2 \sigma_j^2}\right),	
\end{equation}
where ${\Sigma_0}_ j$ is the amplitude of the $j$-th Gaussian, and all the components have a common axis ratio $q$. Similar decomposition into Gaussian components has been used in the literature to fit the surface brightness profile of galaxies [called as the multi-Gaussian expansion (MGE) by \citet{Emsellem94}, and the mixture-of-Gaussians by \citet{Hogg13}]. The lensing and kinematic quantities of our interest -- namely the gradient and the Hessian of the deflection potential, and the line-of-sight velocity dispersion -- follow the principle of superposition. As a result, we can compute these quantities separately for each Gaussian component and then add them together to recover these quantities for the total surface-density profile (see Fig. \ref{fig:divide_conquer}).

We describe the kinematics and lensing analyses for the Gaussian components in Sections \ref{sec:gaussian_kinematics} and \ref{sec:lensing}, but first, we need a fast method to decompose an elliptical profile into Gaussian components. \citet{Cappellari02} presents a method that uses non-linear optimization. However, this non-linear optimization method is computationally too expensive to implement within MCMC. Although, this method has been implemented to compute the kinematic observable while sampling from the lens model posterior \citep[e.g.,][]{Birrer19}. Computing the kinematic observable can involve at least one numerical integration, which is the main bottleneck in the efficiency, not the non-linear optimization method for Gaussian decomposition. To make our lens-modelling method efficient, we require a \textbf{(i)} general, \textbf{(ii)} precise, and \textbf{(iii)} fast technique to decompose a function into Gaussians. In Sections {\ref{sec:integral_transform} and \ref{sec:2d_decomposition}, we provide a technique that satisfies these three requirements.

\subsection{An integral transform for fast Gaussian decomposition} \label{sec:integral_transform}

Now, we introduce an integral transform with a Gaussian kernel. Using this transform, we obtain an algorithm to efficiently decompose an elliptical surface-density profile into Gaussian components.

We start with the simple one-dimensional case of the integral transform. We aim to approximate a function $F(x)$ as a sum of Gaussian components as
\begin{equation} \label{eq:pre_integral_transform_approx}
	F(x) \approx \sum_{n=0}^{N} A_n \exp \left( -\frac{x^2}{2\sigma_n^2} \right),
\end{equation}
where $A_n$ and $\sigma_n$ are respectively the amplitude and the standard deviation of the $n$-th Gaussian component. We can convert this discrete summation into a continuous integral by taking $N \to \infty$. Accordingly, we define the following integral transform:
\begin{equation} \label{eq:integral_transform_in_body}
	F(x) \equiv \int_0^\infty \frac{f(\sigma)}{\sqrt{2 \uppi} \sigma} \exp\left( -\frac{x^2}{2\sigma^2}\right) \dd \sigma.
\end{equation}
Here, the amplitude $A_n$ is converted into a function $f(\sigma)/\sqrt{2\uppi}\sigma$. We call $F(x)$ as the \textit{transform} of $f(\sigma)$. We prove three fundamental properties of this integral transform in Appendix \ref{app:theorems}. These three properties tell us that
\begin{enumerate}[leftmargin=20pt,topsep=0pt, itemsep=0pt]
	\item this integral transform exists for most mass and light profiles of practical use,
	\item the transform is unique for these functions, and
	\item the integral transform is invertible.
\end{enumerate}
We call $f(\sigma)$ as the \textit{inverse transform} of $F(x)$. The inverse transform is given by
\begin{equation} \label{eq:inverse_transform_in_body}
	f(\sigma) = \frac{1}{\mathrm{i} \sigma^2} \sqrt{\frac{2}{\uppi}} \int_{C} z F(z) \exp \left( \frac{z^2}{2 \sigma^2} \right) \dd z.
\end{equation}
Here, $\mathrm{i}$ is the imaginary unit as $\mathrm{i} = \sqrt{-1}$. Also, we have extended $F(x)$ to some region on the complex plane and wrote it as $F(z)$, where $z$ is a complex variable. The contour $C$ for the integral lies within the region where $F(z)$ is defined (for details, see Appendix \ref{app:theorems} and Fig. \ref{fig:roc}). In Section \ref{sec:inverse_transform_algorithm}, we provide an algorithm that does not require $C$ to be explicitly specified for computing the inverse transform $f(\sigma)$.

We can use the inverse transform to decompose a function into Gaussian components and the forward transform to recover the original function by combining the Gaussian components. We first provide an efficient algorithm to compute the inverse transform from equation (\ref{eq:inverse_transform_in_body}), then we discuss a method for computing the forward transform from equation (\ref{eq:integral_transform_in_body}).

\subsubsection{Computing the inverse transform}\label{sec:inverse_transform_algorithm}

The integral transform in equation (\ref{eq:integral_transform_in_body}) can be converted into a Laplace transform by suitable change of variables (Remark \ref{rmk:laplace}). Therefore, we can use any of the several algorithms available for inverse Laplace transform by appropriately changing the variables \citep[for a simple overview of the algorithms, see][]{Abate06}. In this paper, we modify the Euler algorithm to approximate equation (\ref{eq:inverse_transform_in_body}) as
\begin{equation} \label{eq:inverse_num_approx}
	f(\sigma) \approx \sum_{n=0}^{2P}  \eta_n \re \left[ F \left( \sigma \chi_n \right) \right].
\end{equation}
\citep{Abate00}. Here, the weights $\eta_n$ and nodes $\chi_n$ can be complex-valued and they are independent of $f(\sigma)$. The weights and the nodes are given by
\begin{equation}
 \begin{split}       
 &\chi_n = \left[\frac{2P \log (10)}{3} + 2\uppi \mathrm{i} n \right]^{1/2}, \\
 &\eta_n = (-1)^n\ 2\sqrt{2\uppi}\ 10^{P/3} \xi_p, \\
 &\xi_0 = \frac{1}{2},\quad \xi_n = 1,\ 1\leq n \leq P,\quad \xi_{2P} = \frac{1}{2^P}, \\
 &\xi_{2P-n} = \xi_{2P-n+1} + 2^{-P} \binom Pn,\ 0 < n < P. 
 \end{split}
\end{equation}
 We can precompute the weights and the nodes just once before the MCMC sampling. In that way, computing them does not add any extra burden in computing the likelihood. The precision of the inverse transform is $\sim \mathcal{O}(10^{-0.6P})$ \citep{Abate06}. Therefore, the value of $P$ can be appropriately chosen to achieve a required precision. Note that the decimal precision of the machine sets an effective upper limit for $P$. For example, the precision will not improve with increasing $P$ when $P \gtrsim 12$ for 32-bit floating point number, and when $P \gtrsim 27$ for 64-bit floating point number. Thus, equation (\ref{eq:inverse_num_approx}) gives a straightforward, fast, and precise algorithm to compute the inverse transform.

\subsubsection{Computing the forward transform} \label{subsubsec:forward_transform}
Let us approximate the forward transform integral such that we can recover $F(x)$ from only a finite number of $f(\sigma)$ values computed at fixed $\sigma$'s. This finite number should be on the order of tens to keep the lensing analysis manageable, as we have to compute lensing quantities for each Gaussian component individually. We write equation (\ref{eq:integral_transform_in_body}) as
\begin{equation} \label{eq:gaussian_expansion_1d}
\begin{split}
	F(x) &= \frac{1}{\sqrt{2\uppi}} \int_0^{\infty} f(\sigma) \exp \left(-\frac{x^2}{2 \sigma^2} \right) \dd (\log \sigma) \\
	\Rightarrow F(x) &= \lim_{N \to \infty} \sum_{n=1}^N \frac{f(\sigma_n)}{\sqrt{2\uppi}} \exp \left(-\frac{x^2}{2\sigma_n^2} \right) \Delta(\log\sigma)_n \\
	\Rightarrow F(x) &\approx \sum_{n=1}^{N} A_n \exp \left(-\frac{x^2}{2\sigma_n^2} \right).
\end{split}
\end{equation}
We have recovered the form of equation (\ref{eq:pre_integral_transform_approx}) by taking logarithmically spaced $\sigma_n$. Here, the amplitudes are $A_n = w_n f(\sigma_n) \Delta (\log \sigma)_n/\sqrt{2\uppi}$. The weights $w_n$ depend on the choice of the numerical integration method. We use the trapezoidal method with weights $w_1 = 0.5$, $w_n = 1$ for $1<n<N$, $w_N=0.5$, as the trapezoidal method is highly efficient to numerically compute integrals of this form \citep{Goodwin49}. As a result, equation (\ref{eq:gaussian_expansion_1d}) efficiently recovers $F(x)$ from only a finite number of $f(\sigma)$ values. 

\begin{figure*}
\includegraphics[width=\textwidth]{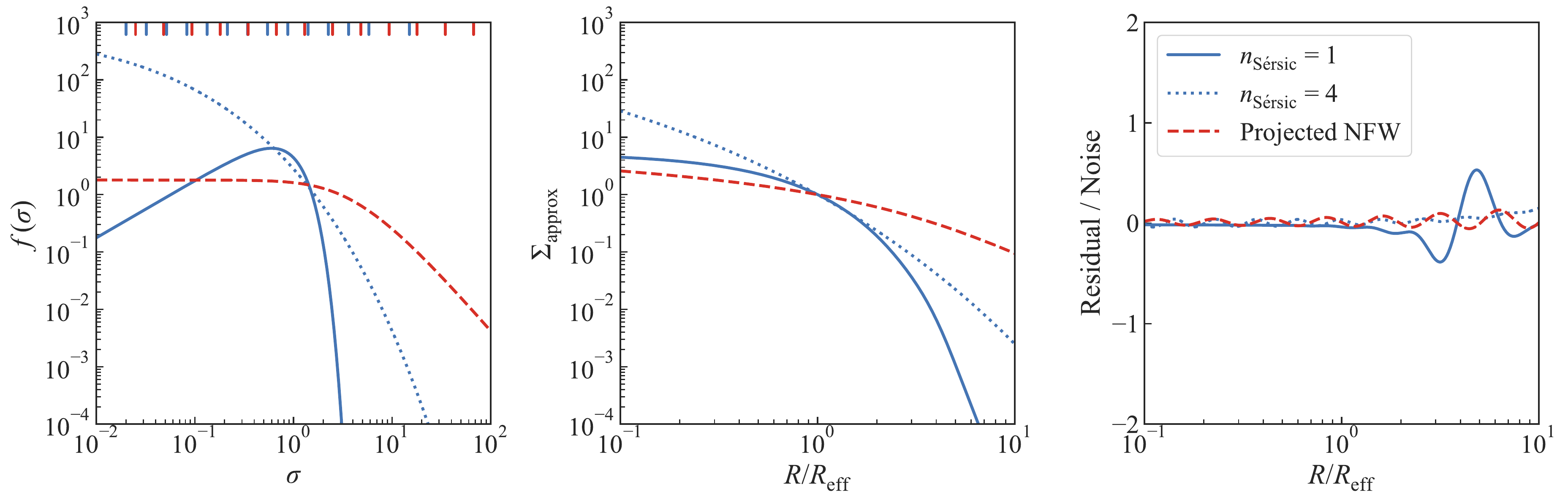}	
\caption{Decomposing the S\'ersic profile and the projected NFW profile into Gaussian components using the integral transform with a Gaussian kernel. The blue lines correspond to the S\'ersic profiles with: solid for S\'ersic index $n_{\rm \text{S\'ersic}} = 1$ and dotted for $n_{\rm \text{S\'ersic}}=4$. The red, dashed lines correspond to the two-dimensional projected NFW profile. \textbf{Left:} the inverse transform $f(\sigma)$ of the S\'ersic profiles and the projected NFW profile. Here, we choose the NFW scale radius $r_{\rm s} = 5R_{\rm eff}$. To decompose a function into 15 Gaussian components, we only need to compute $f(\sigma)$ at 15 points. These points are marked along the top border as blue ticks for the S\'ersic profiles and as red ticks for the NFW profile. \textbf{Center:} recovering the original profile as $\Sigma_{\rm approx}(R)$ using the forward transform by combining the 15 Gaussian components. We do not plot the true form of $\Sigma_{\rm \text{S\'ersic}}(R)$ or $\Sigma_{\rm \text{NFW}}(R)$ because they are visually almost indistinguishable from $\Sigma_{\rm approx}(R)$. \textbf{Right:} noise-normalized difference between the recovered profile $\Sigma_{\rm approx}(R)$ and the true form of $\Sigma_{\rm \text{S\'ersic}}(R)$ or $\Sigma_{\rm \text{NFW}}(R)$. We assume 1 per cent Poisson noise at $R = R_{\rm eff}$ to obtain the noise level for normalizing the residual. Our method approximates the NFW profile and the S\'ersic profile as a sum of 15 Gaussians within the noise level for $0.1R_{\rm eff} \leq R \leq 10R_{\rm eff}$.
	\label{fig:1d_example}
}
\end{figure*}

\begin{figure*}
\includegraphics[width=\textwidth]{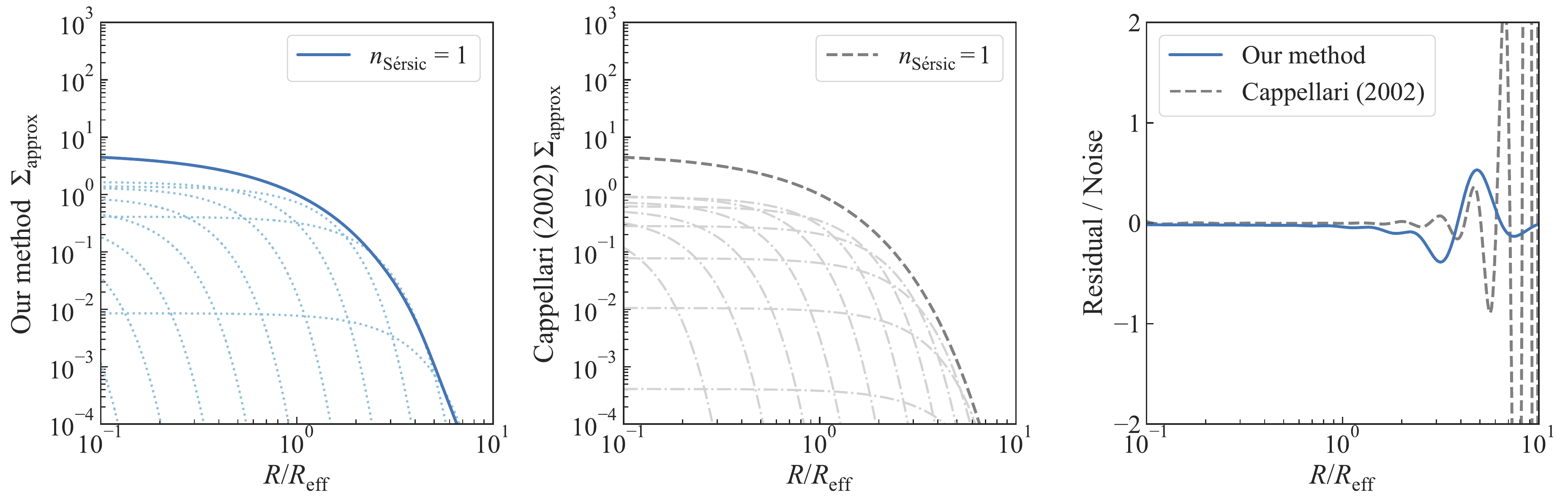}	
\caption{Comparison between our method and the multi-Gaussian expansion (MGE) method from \citet{Cappellari02} to decompose a one-dimensional function into Gaussian components. Here, we only show the case for a S\'ersic function with $n_{\rm \text{S\'ersic}}=1$, however the cases for higher S\'ersic indices of for the projected NFW profile are qualitatively similar or better. \textbf{Left:} the S\'ersic function (solid, blue line) approximated with 15 Gaussian components using our Gaussian decomposition method. The dotted, lighter-blue lines show the individual Gaussian components. Some of the Gaussian components are out of the figure range. \textbf{Center:} same as the left figure but using the MGE method with 15 Gaussian components. The dashed, grey line shows the S\'ersic function approximated by MGE and the dot-dashed, lighter-grey lines show individual Gaussian components. \textbf{Right:} comparison of the noise-normalized residual for the two methods. We assume 1 per cent Poisson noise at effective radius $R_{\rm eff}$ to obtain the noise level for normalizing the residual. The MGE method approximates the S\'ersic function within the noise level up to $\sim 6 R_{\rm eff}$, whereas our method approximates the S\'ersic function within the noise level up to $10R_{\rm eff}$. More importantly, our method is $\sim 10^3$ times faster than the MGE method to decompose a one-dimensional function into Gaussian components.
	\label{fig:1d_example_compare}
	}
\end{figure*}

As an example, we demonstrate the integral transform method to decompose the NFW profile and the S\'ersic profile into Gaussian components (Fig. \ref{fig:1d_example}).  The two-dimensional projected NFW profile is given by
\renewcommand*{\arraystretch}{1.7}
\begin{equation} \label{eq:2d_nfw}
	\Sigma_{\rm NFW} (R) = \frac{\rho_{\rm s} r_{\rm s}}{(R/r_{\rm s})^2 - 1} \times
	\left\{
		\begin{array} {ll}
			1 - \frac{ \sec^{-1} \left(R/r_{\rm s}\right)}{\sqrt{(R/r_{\rm s})^2-1}}  & (R > r_{\rm s}), \\
			\frac{1}{3} \left[(R/r_{\rm s})^2 - 1\right] & (R = r_{\rm s}), \\
			1- \frac{ {\rm sech}^{-1} {(R/r_{\rm s}) } } {\sqrt{1-(R/r_{\rm s})^2}}  & (R < r_{\rm s})
		\end{array}
	\right.
\end{equation}
\citep{Bartelmann96}. Here, $\rho_{\rm s}$ is the three-dimensional density normalization, and $r_{\rm s}$ is the scale radius. The S\'ersic profile is given by
\begin{equation} \nonumberoff
	\Sigma_{\rm \text{S\'ersic}}(R) = \Sigma_{\rm eff} \exp\left[ - b_n \left\{ (R/R_{\rm eff})^{1/n_{\rm \text{S\'ersic}}} - 1 \right\} \right]
\end{equation}
\citep{Sersic68}. Here, the normalizing factor $b_n$ ensures that half of the total projected mass is contained within the effective radius $R_{\rm eff}$.  We only need to compute as many $f(\sigma)$ values as the number of Gaussian components. We can appropriately choose this number to achieve the required precision for approximating the original function within a given range of $R$. In this example, we set $r_{\rm s}=5R_{\rm eff}$ and assume 1 per cent Poisson noise at $R = R_{\rm eff}$. Then, we can approximate both the projected NFW profile and the S\'ersic function within the noise level with only 15 Gaussian components in the range $0.1R_{\rm eff} \leq R \leq 10R_{\rm eff}$. The standard deviations $\sigma_n$ of the 15 Gaussians are logarithmically spaced between $0.005r_{\rm s}$ and $50r_{\rm s}$ for the NFW profile, and between $0.02R_{\rm eff}$ and $15R_{\rm eff}$ for the S\'ersic profile. Thus using the integral transform method, we can decompose a function into Gaussian components within any required precision by appropriately choosing the component number $N$.

\subsubsection{The integral transform method is more efficient than the MGE method.}
We compare our method to decompose a one-dimensional function into Gaussian components with the MGE method (Fig. \ref{fig:1d_example_compare}). We use both methods to decompose the S\'ersic function into 15 Gaussian components. The MGE method approximates the S\'ersic function within the noise level up to $\sim 6R_{\rm eff}$, whereas our method approximates the S\'ersic function within the noise level up to 10$R_{\rm eff}$ with the same number of components. Albeit, we can increase the number of Gaussian components in the MGE method to reach the desired precision within a given radius. In our method, the precision of the decomposition can be affected by both $P$ in equation (\ref{eq:inverse_num_approx}) and the number of Gaussians $N$ in equation (\ref{eq:gaussian_expansion_1d}). However, if $P$ is appropriately chosen so that $10^{-0.6P}$ is sufficiently (e.g., by a factor of $10^{-2}$--$10^{-4}$) smaller than the required precision, then the precision predominantly depends on $N$. For lensing and kinematic analyses, increasing the number of Gaussian components $N$ introduces more computational burden than increasing $P$. Therefore, it is advisable to first choose a sufficiently large $P$ and then adjust the number of Gaussians to achieve the required precision. Note that the real power of our method is in its efficiency. A \textsc{python} implementation of our method is $\sim 10^3$ times faster than the MGE method to decompose a one-dimensional function into Gaussian components with similar or better precision.

\subsection{Decomposing a two-dimensional elliptical profile with the one-dimensional transform}\label{sec:2d_decomposition}

So far, we have discussed the one-dimensional case of the integral transform; now we show that the one-dimensional transform is sufficient to decompose a two-dimensional elliptical profile. We can extend the one-dimensional integral transform from equation (\ref{eq:integral_transform_in_body}) into a two-dimensional integral transform for a function $f(\sigma_1, \sigma_2)$ as
\begin{equation} \label{eq:2d_transform}
\begin{split}
	F(x, y) =  \int_0^\infty \dd \sigma_1 \int_0^\infty \dd \sigma_2 \  \frac{f(\sigma_1, \sigma_2)}{2 \uppi \sigma_1 \sigma_2}\ \exp\left( -\frac{x^2}{2\sigma_1^2} - \frac{y^2}{2\sigma_2^2} \right).
\end{split}
\end{equation}
If $F(x,y)$ is elliptically symmetric, then we can express it as $F(R)$ in terms of the \textit{elliptical radius} $R=\sqrt{q^2x^2+y^2}$ and axis ratio $q$. Then, we can write
\begin{equation}\nonumberoff
	\begin{split}
		F(x,y) &= F\left(R(q)\right) = \int_0^{\infty} F\left(R(\varrho)\right)\ \delta(\varrho-q)\ \dd \varrho \\
		&= \int_0^{\infty} \dd \varrho \ \delta(\varrho-q \mathcal{}) \int_0^{\infty} \dd \sigma\  \frac{f_y(\sigma)}{\sqrt{2\uppi}\sigma}\ \exp\left(-\frac{R(\varrho)^2}{2\sigma^2} \right),
	\end{split}
\end{equation}
where $f_y(\sigma)$ is the inverse transform of $F(0,y)$. If we make the change of variables $\sigma_1 = \sigma/\varrho$, $\sigma_2=\sigma$, this integral becomes
\begin{equation} \nonumberoff \label{eq:2d_transform_2}
	\begin{split}
		F(x,y) &= \frac{1}{2\uppi} \int_0^{\infty} \dd \sigma_1 \int_0^{\infty} \dd \sigma_2 \frac{\sqrt{2\uppi}q \ \delta(\sigma_2/\sigma_1-q)\ f_y(\sigma_2)}{\sigma_1 \sigma_2} \\
		&\qquad\qquad\qquad\qquad \times \exp \left(-\frac{x^2}{2\sigma_1^2}-\frac{y^2}{2\sigma_2^2} \right).
	\end{split}
\end{equation} 
Because of the uniqueness property (Theorem \ref{thm:uniqueness}), comparing equations (\ref{eq:2d_transform}) and (\ref{eq:2d_transform_2}) we can write
\begin{equation} \nonumberoff
	f(\sigma_1, \sigma_2) = 	\sqrt{2\uppi}q \ \delta\left(\frac{\sigma_2}{\sigma_1}-q \right)f_y(\sigma_2).
\end{equation}
Therefore, for an elliptically symmetric function, it is sufficient to numerically compute the one-dimensional inverse transform $f_y(\sigma_2)$ along the $y$-axis. As a result, we can express a two-dimensional elliptical function as a sum of elliptical Gaussian components as
\begin{equation} \label{eq:gauss_expansion_elliptical}
	F(x, y) \approx	\sum_{n=1}^{N} A_n \exp \left( -\frac{q^2 x^2 + y^2}{2\sigma_n^2} \right).
\end{equation}

By now, we have shown that the integral transform method meets all of our three requirements for decomposing a function into Gaussian components:
\begin{enumerate}[topsep=0pt, leftmargin=20pt]
	\item \textit{generality}: the method applies to most mass and light profiles of practical use,
	\item \textit{precision}: the method achieves any required precision over a given range by appropriately choosing the number of Gaussian components, and
	\item \textit{efficiency}: the method runs approximately $\sim 10^3$ times faster than the previously available method.
\end{enumerate}
With these three requirements met, our lensing analysis method has cleared the first hurdle to be feasible in practice. In Section \ref{sec:gaussian_lensing}, we show that we can efficiently compute the gradient and the Hessian of the deflection potential for an elliptical Gaussian surface-density. With that, our method also clears the final hurdle to be efficient. Next in Section \ref{sec:gaussian_kinematics}, we summarize the kinematic analysis for the Gaussian components from \citet{Cappellari08}. Then in Section \ref{sec:lensing}, we describe the lensing analysis for the Gaussian components and complete the unification of lensing and kinematic descriptions.

\section{Kinematics of Gaussian components} \label{sec:gaussian_kinematics}

\citet{Cappellari08} presents the Jeans anisotropic modelling of kinematics for a mass profile decomposed into Gaussian components. We summarize the analysis here to complete our unified framework. The kinematic observable is the luminosity-weighted, line-of-sight velocity dispersion. The velocity dispersion can be an integrated measurement within a single aperture or it can be spatially resolved on the plane of the sky. To compute this quantity for a combination of mass and light profiles, we need to solve the Jeans equations. We can decompose the surface mass-density profile into Gaussian components as
\begin{equation} \label{eq:2d_mass_kin}
	\Sigma (x, y) \approx \sum_{j=1}^J	{\Sigma_{0}}_{j} \exp \left( - \frac{q_j^2 x^2 + y^2}{2 \sigma_j^2} \right),
\end{equation}
and decompose the surface brightness profile into Gaussian components as
\begin{equation} \label{eq:2d_light_kin}
	I (x, y) \approx \sum_{k=1}^K	{I_{0}}_{k} \exp \left( - \frac{q_k^2 x^2 + y^2}{2 \sigma_k^2} \right).
\end{equation}
Here, we use different subscript letters to make the context of the Gaussian decomposition clear: we use the subscript $j$ for a component of the mass profile and the subscript $k$ for a component of the light profile. We have also allowed different ellipticity for each Gaussian component represented by $q_j$ or $q_k$. This is the most general case, for example, when structures with different ellipticities constitute the total mass or light distribution. We first need to deproject these two-dimensional profiles into three-dimension as the kinematic quantities depend on the three-dimensional distributions of mass and light. We assume axisymmetry or spherical symmetry for the deprojected three-dimensional structure to circumvent the infinite degeneracy in deprojection. First, we provide the kinematic analysis for the axisymmetric case in Section \ref{sec:axisymmetric_kinematics}; then we do the same for the simpler case of spherical symmetry in Section \ref{sec:spherical_kinematics}.

\subsection{Axisymmetric case}\label{sec:axisymmetric_kinematics}

For an axisymmetric system, the cylindrical coordinates $(R^{\prime}, z^{\prime}, \phi^{\prime})$ are the most suitable to express the Jeans equations. We use the prime symbol to denote the coordinates in the system's \textit{symmetry-frame}, where the $z^{\prime}$-axis aligns with the axis of symmetry. We assign $(x, y, z)$ coordinates to the \textit{sky frame}, where the $z$-axis aligns with the line of sight and the $x$-axis aligns with the projected major axis. If the galaxy is inclined by an angle $\iota$, then the $(x^{\prime}, y^{\prime}, z^{\prime})$ coordinates in the symmetry frame relate to the the $(x, y, z)$ coordinates in the sky frame as
\renewcommand*{\arraystretch}{1.1}
\begin{equation} \nonumberoff
	\begin{pmatrix} x^{\prime} \\ y^{\prime} \\ z^{\prime}	\end{pmatrix} 
	= 
	\begin{pmatrix}
		1 &0 &0 \\ 0 &\cos \iota  &-\sin \iota \\ 0 &\sin \iota & \cos \iota 
	\end{pmatrix}
	\begin{pmatrix} x \\ y \\ z	\end{pmatrix} .
\end{equation}
In the symmetry frame, equation (\ref{eq:2d_mass_kin}) deprojects into the mass density profile $\rho$ as
\begin{equation}
	\rho(R^{\prime}, z^{\prime}) = \sum_{j=1}^J 	\frac{{q_j^{\prime}}^2 {\Sigma_{0}}_{j}}{\sqrt{2\uppi} \sigma_j q_j} \exp \left( - \frac{{q_j^{\prime}}^2 {R^{\prime}}^2 + {z^{\prime}}^2}{2 \sigma_j^2} \right),
\end{equation}
and equation (\ref{eq:2d_light_kin}) deprojects into the light density $l$ as 
\begin{equation}
	l (R^{\prime}, z^{\prime}) = \sum_{k=1}^K	\frac{{q_k^{\prime}}^2 {I_0}_{k}}{\sqrt{2\uppi} \sigma_k q_k} \exp \left( - \frac{{q_k^{\prime}}^2 {R^{\prime}}^2 + {z^{\prime}}^2}{2 \sigma_k^2} \right).
\end{equation}
Here, the intrinsic axis ratio $q^{\prime}$ relates to the projected axis ratio $q$ as
\begin{equation} \nonumberoff
	q^{\prime} = \frac{\sqrt{q^2 - \cos^2 \iota}}{\sin \iota}.	
\end{equation}
We can first solve the Jeans equations for these mass and light density profiles in the symmetry frame to get the intrinsic velocity dispersions and then integrate along the line of sight to obtain the line-of-sight velocity dispersion.
 
The Jeans equations to solve for an axisymmetric system are 
\begin{equation} \nonumberoff
\begin{split}
	&\frac{b\ l\ \overline{v_{z^{\prime}}^2} - l\ \overline{v_{\phi^{\prime}}^2}}{R^{\prime}} + \frac{\partial \left( b\ l\ \overline{v_{z^{\prime}}^2} \right)}{\partial R^{\prime}} = - l \frac{\partial \Phi}{\partial R^{\prime}}, \\
	&\frac{\partial \left( l\ \overline{v_{z^{\prime}}^2} \right)}{\partial z^{\prime}} = - l \frac{\partial \Phi}{\partial z^{\prime}}.
\end{split}
\end{equation}
Here,  the gravitational potential $\Phi$ relates to the three-dimensional mass density $\rho$ by $\nabla^2 \Phi = \rho$, and $b$ represents the anisotropy as in $\overline{v_{R^{\prime}}^2} = b \overline{v_{z^{\prime}}^2}$. We can let $b_k$ for each luminous Gaussian component have different values to approximate the luminosity-weighted anisotropy parameter as
\begin{equation} \nonumberoff
	\beta_{z^{\prime}} (R^{\prime}, z^{\prime}) \equiv 1 - \frac{\overline{v_{z^{\prime}}^2}}{\overline{v_{R^{\prime}}^2}} \approx 1 - \frac{\sum_k l_k}{\sum_k b_k l_k}
\end{equation}
\citep{Binney82, Cappellari08}. The line-of-sight second velocity moment for total mass profile obtained from solving the Jeans equations is given by
\begin{equation} \label{eq:axisymmetric_vlos}
\begin{split}
	\overline{v_{\rm los}^2}	 (x, y) =& 2 \sqrt{\uppi} G \int_0^1 \sum_{j=1}^J \sum_{k=1}^K \frac{{q_j^{\prime}}^3 {q_k^{\prime}}^2 {\Sigma_0}_j {I_0}_k u^2}{\sigma_j q_j \sigma_k q_k} \\
	& \times \frac{\sigma_k^2\left( \cos^2 \iota + b_k \sin^2 \iota \right) + \mathcal{D}x^2 \sin^2 \iota}{\left(1-\mathcal{C}u^2 \right)\sqrt{\left(\mathcal{A} + \mathcal{B} \cos^2 \iota \right) \left[1 - \left(1-{q_j^{\prime}}^2\right)u^2 \right]}} \\
	& \times \exp \left(-\mathcal{A} \left[x^2 + \frac{(\mathcal{A}+\mathcal{B})y^2}{\mathcal{A}+\mathcal{B}\cos^2 \iota} \right] \right) \dd u,
	\end{split}
\end{equation}
where $G$ is the gravitational constant and
\begin{equation} \nonumberoff
	\begin{split}
		\mathcal{A} &= \frac{1}{2} \left( \frac{u^2 {q_j^{\prime}}^2}{\sigma_j^2} + \frac{{q_k^{\prime}}^2}{\sigma_k^2} \right), \\
		\mathcal{B} &= \frac{1}{2}\left\{ \frac{1-{q_k^{\prime}}^2}{\sigma_k^2} + \frac{{q_j^{\prime}}^2\left(1-{q_j^{\prime}}^2\right)u^4}{\sigma_j^2 \left[ 1- \left(1-{q_j^{\prime}}^2\right) u^2\right] } \right\},\\
		\mathcal{C} &= 1 - {q_j^{\prime}}^2 - \frac{{q_j^{\prime}}^2\sigma_k^2}{\sigma_j^2},\\
		\mathcal{D} &= 1 - b_k{q_k^{\prime}}^2 - \left[(1-b_k)\mathcal{C} + (1-{q_j^{\prime}}^2)b_k \right] u^2
	\end{split}
\end{equation}
\citep{Cappellari08}. The line-of-sight velocity dispersion $\sigma_{\rm los}$ relates to the second velocity moment by $\overline{v^2_{\rm los}} = v_{\rm mean}^2 + \sigma^2_{\rm los}$, where $v_{\rm mean}$ is the stellar mean velocity.

\subsection{Spherical case} \label{sec:spherical_kinematics}
If we assume the system is spherically symmetric, then the spherical coordinates $(r, \phi, \theta)$ are the most suitable to express the Jeans equations. In this coordinate system, the mass density profile deprojected from equation (\ref{eq:2d_mass_kin}) takes the form 
\begin{equation}
	\rho(r) = \sum_{j=1}^J \frac{{\Sigma_{0}}_j }{\sqrt{2 \uppi} \sigma_j q_j} \exp \left( -\frac{r^2}{2 \sigma_j^2}\right),
\end{equation}
and the light density profile deprojected from equation (\ref{eq:2d_light_kin}) turns into
\begin{equation}
	l(r) = \sum_{k=1}^K \frac{{I_{0}}_k}{\sqrt{2 \uppi} \sigma_k q_k} \exp \left( -\frac{r^2}{2 \sigma_k^2}\right).
\end{equation}
The projected axis ratio $q$ shows up in these equations to keep the total mass and luminosity conserved. We can express the three-dimensional enclosed mass for this density profile as
\begin{equation}
	M(r) = \sum_{j=1}^J \frac{2\uppi \sigma_j^2 {\Sigma_0}_j}{q_j} \left[ \erf \left( \frac{r}{\sqrt{2}\sigma_j} \right) - \sqrt{\frac{2}{\uppi}} \frac{r}{\sigma_j} \exp \left( - \frac{r^2}{2 \sigma_j^2} \right) \right],
\end{equation}
where $\erf\ (x)$ is the error function. The spherical Jeans equation is
\begin{equation} \nonumberoff
	\frac{\dd \left( l\ \overline{v_r^2}\right)}{\dd r} + \frac{2 \beta\ l\  \overline{v_r^2}}{r} = - l \frac{\dd \Phi}{\dd r},	
\end{equation}
where $\beta(r)$ is the anisotropy parameter given by 
\begin{equation}
	\beta(r) = 1 - {\overline{v_{\theta}^2}}/{\overline{v_{r}^2}}.
\end{equation}
Spherical symmetry imposes that  $\overline{v_{\theta}^2} = \overline{v_{\phi}^2}$. By solving the Jeans equation for the spherically symmetric case, we can obtain the line-of-sight second velocity moment as
\begin{equation} \label{eq:spherical_vlos}
	\overline{v_{\rm los}^2} (x, y) =  \frac{2G}{I(x, y)} \int_{\sqrt{x^2+y^2}}^\infty \mathcal{K}_\beta \left(\frac{r}{\sqrt{x^2+y^2}} \right)\ l(r)\ M(r)\ \frac{\dd r}{r}
\end{equation}
\citep{Mamon05}. Here, the function $\mathcal{K}_\beta (\upsilon)$ depends on the form of the anisotropy parameter $\beta(r)$. For the isotropic case with $\beta = 0$, the function $\mathcal{K}_\beta$ shapes into
\begin{equation}
	\mathcal{K}_{\beta} (\upsilon) = \sqrt{1 - \frac{1}{\upsilon^2}}.
\end{equation}
For the Osipkov-Merritt parameterization $\beta(r) = r^2/(r^2 + r_{\rm ani}^2)$, where $r_{\rm ani}$ is a scale radius, the function $\mathcal{K}_{\beta}$ takes the form
\begin{equation}
\begin{split}
	\mathcal{K}_{\beta} &\left( \upsilon \right) = \frac{\upsilon_{\rm ani}^2 + 1/2}{(\upsilon_{\rm ani}+1)^{3/2}} \left( \frac{\upsilon^2 + \upsilon^2_{\rm ani}}{\upsilon} \right) \tan^{-1} \left(\sqrt{\frac{\upsilon^2 - 1}{\upsilon_{\rm ani}^2 + 1}}\right) \\
	& \qquad\qquad\qquad\qquad - \frac{1/2}{\upsilon_{\rm ani}^2 + 1} \sqrt{1 - \frac{1}{\upsilon^2}}.
\end{split}
\end{equation}
with $\upsilon_{\rm ani} = r_{\rm ani}/\sqrt{x^2 + y^2}$  \citep{Osipkov79, Merritt85b, Merritt85}. See equation (A16) of \citet{Mamon05} for the form of $\mathcal{K}_{\beta}$ corresponding to other parameterizations of $\beta(r)$. When assuming spherical symmetry is sufficient, we can use equation (\ref{eq:spherical_vlos}) to compute the line-of-sight velocity dispersion in a much simpler way than the axisymmetric case [cf. equation (\ref{eq:axisymmetric_vlos})].

The kinematic description of an elliptical mass distribution by decomposing it into Gaussian components is thus well developed in the literature. In the next section, we unify the lensing description with the kinematic description under the same framework.

\section{Lensing by Gaussian components} \label{sec:lensing}

In this section, we present the lensing analysis for an elliptical surface-density profile decomposed into Gaussian components. In Section \ref{sec:divide_and_conquer}, we introduced an integral transform that efficiently decomposes an elliptical surface-density profile into Gaussian components as
\begin{equation} \nonumberoff
	\Sigma (x, y) \approx \sum_{j=1}^J	{\Sigma_{0}}_{j} \exp \left( - \frac{q_j^2 x^2 + y^2}{2 \sigma_j^2} \right)	.
\end{equation}
We can compute a lensing quantity for each individual Gaussian component, and then linearly add the contributions from all the components to obtain the total lensing quantity. For example, if $\balpha_j(x, y)$ is the deflection at position $(x, y)$ for the $j$-th Gaussian component, then the total deflection is simply given by $\balpha (x, y) = \sum_{j=1}^J \balpha_j (x, y)$. Therefore, it is sufficient to analyze the lensing properties of one elliptical Gaussian component. We use the complex formulation of lensing to solve the deflection integral for an elliptical Gaussian surface-density profile. Below, we first lay out the complex formalism of lensing in Section \ref{sec:complex_lensing}; then we study the lensing properties of an elliptical Gaussian surface-density profile in Section \ref{sec:gaussian_lensing}.

\subsection{Complex formulation of lensing} \label{sec:complex_lensing}
The strong lensing effect is usually described using the vector formulation on the two-dimensional image plane. We first define the lensing quantities in the familiar vector formulation, then we translate them to the complex formulation. The convergence $\kappa$ is a dimensionless surface-density defined as $\kappa \equiv \Sigma/\Sigma_{\rm crit}$, where the critical density $\Sigma_{\rm crit}$ is given by
\begin{equation} \nonumberoff
	\Sigma_{\rm crit} = \frac{c^2 D_{\rm s}}{4 \uppi G D_{\rm ds} D_{\rm d}}.	
\end{equation}
Here, $c$ is the speed of light. The three angular diameter distances are $D_{\rm d}$: between the observer and the deflector, $D_{\rm s}$: between the observer and the source, and $D_{\rm ds}$: between the deflector and the source. The convergence $\kappa$ relates to the vector deflection angle ${\balpha}$ as $\kappa = \nabla \cdot \balpha / 2$. The deflection angle $\balpha$ is the gradient of the deflection potential as $\balpha = \nabla \psi$, thus the convergence $\kappa$ relates to the deflection potential $\psi$ as $\kappa = \nabla^2 \psi / 2$. The Hessian of the deflection potential is
\renewcommand*{\arraystretch}{2.2}
\begin{equation} \nonumberoff
	\bm{\mathsf{H}} = \begin{pmatrix}
 			\dfrac{\partial^2 \psi}{\partial^2 x} &\dfrac{\partial^2 \psi}{\partial x \partial y} \\
 			\dfrac{\partial^2 \psi}{\partial x \partial y} &\dfrac{\partial^2 \psi}{\partial^2 y}
 		\end{pmatrix}.
\end{equation}
The convergence $\kappa$, the shear parameters $(\gamma_1,\ \gamma_2)$, and the magnification $\mu$ relate to the Hessian, since we can express them as
\begin{equation}\nonumberoff
	\begin{split}
		\kappa &= \frac{1}{2}\left(\frac{\partial^2 \psi}{\partial^2 x} + \frac{\partial^2 \psi}{\partial^2 y} \right), \\
		\gamma_1 &= \frac{1}{2} \left(\frac{\partial^2 \psi}{\partial^2 x} - \frac{\partial^2 \psi}{\partial^2 y} \right), \\
		\gamma_2 &= \frac{\partial^2 \psi}{\partial x \partial y}, \\
		\mu &= \frac{1}{\det\ (\bm{\mathsf{I}} - \bm{\mathsf{H}})},
	\end{split}	
\end{equation}
where $\bm{\mathsf{I}}$ is the identity matrix. Therefore, if we start with a convergence $\kappa$ and derive the deflection $\balpha$ and the shear parameters $(\gamma_1,\ \gamma_2)$, then we can obtain the gradient and the Hessian of the deflection potential from them.

Now we reformulate the lensing quantities on the complex plane. Following \citet{Bourassa73}, we can express the deflection vector $\balpha$ as a complex quantity
\begin{equation} \nonumberoff
	\alpha(z) \equiv \alpha_x + \mathrm{i} \alpha_y,	
\end{equation}
where the complex quantity $z = x + \mathrm{i}y$ corresponds to the position vector $\bold{r} = (x,\ y)$. We can define a complex deflection potential $\psi(z)$ with its real part equal to the usual deflection potential \citep{Schramm90}. Then, the complex deflection angle is the Wirtinger derivative of the deflection potential as
\begin{equation} \nonumberoff
	\alpha (z) = \frac{\partial \psi}{\partial x} + \mathrm{i} \frac{\partial \psi}{\partial y} = 2 \frac{\partial \psi}{\partial z^*}.
\end{equation}
We can express the convergence $\kappa$ as
\begin{equation} \nonumberoff
	\kappa = \frac{\partial \alpha^*}{\partial z^*}.
\end{equation}
Furthermore, the complex shear $\gamma \equiv \gamma_1 + \mathrm{i} \gamma_2$ satisfies the relation
\begin{equation} \label{eq:complex_shear_definition}
	\gamma^* = \frac{\partial \alpha^*}{\partial z}.
\end{equation}
Using this complex formulation, we analyze the lensing properties of an elliptical Gaussian convergence next in Section \ref{sec:gaussian_lensing}. 

\subsection{Lensing by elliptical Gaussian convergence} \label{sec:gaussian_lensing}

We derive the deflection angle and shear for the elliptical Gaussian convergence
\begin{equation} \label{eq:elliptical_gauss_convergence}
	\kappa(R) = \kappa_0 \exp \left( -\frac{R^2}{2 \sigma^2}\right),
\end{equation}
where $R=\sqrt{q^2x^2 + y^2}$ is the elliptical radius. Using the complex formulation, the deflection angle for the elliptical convergence can be obtained from
\begin{equation} \label{eq:deflection_integral}
\begin{split}
	\alpha^*(z) &= 2\ \sgn (z)\ \int_0^{R(z)} \dd \zeta \frac{\zeta \kappa(\zeta)}{\sqrt{q^2 z^2 - (1-q^2) \zeta^2}} \\
	&= \frac{2 \kappa_0 \ }{ q z} \int_0^{R(z)} \dd \zeta \frac{\zeta \exp (-\zeta^2/2\sigma^2)}{\sqrt{1 - {(1-q^2)\zeta^2}/{q^2 z^2}}},
\end{split}
\end{equation}
where $\sgn(z) \equiv \sqrt{z^2}/z$ is the complex sign function, and  $R(z) = \sqrt{q^2x^2 + y^2}$ is the semi-minor axis length for the ellipse with axis-ratio $q$ that goes through the point $z=x+iy$ \citep{Bourassa75, Bray84}. With changes of variables $s=1/2\sigma^2$, $t = (1-q^2)/q^2 z^2$, $\tau = \sqrt{1 - t\zeta^2}$, we can express equation (\ref{eq:deflection_integral}) as
\begin{equation} \label{eq:deflection_conjugate}
	\begin{split}
		\alpha^*(z) &=  \frac{2\kappa_0 \ \mathrm{e}^{-s/t}}{q z t} \int_{\sqrt{1-tR(z)^2}}^{1} \dd \tau \exp \left( \frac{s}{t} \tau^2 \right) \\
		&= \frac{2\kappa_0 \ \mathrm{e}^{-s/t}}{ q z t} \left[ \frac{1}{2}  \sqrt{\frac{\uppi t}{s}} \erfi \left( \sqrt{\frac{s}{t}} \tau \right) \right]_{\sqrt{1-t(q^2x^2 + y^2)}}^{1} \\
		&=  \kappa_0 \sigma \sqrt{\frac{2 \uppi}{1-q^2}} \exp \left( -\frac{q^2 z^2}{2\sigma^2(1 - q^2)}\right) \left[ \erfi \left( \frac{q z}{\sigma \sqrt{2(1-q^2)}} \right) \right. \\
		& \qquad \qquad \qquad \qquad - \left. \erfi \left( \frac{ q^2 x + i y}{\sigma \sqrt{2(1-q^2)}} \right) \right] \\
		&= \kappa_0 \sigma \sqrt{\frac{2\uppi}{1-q^2}}\ \varsigma(z;q),
	\end{split}
\end{equation}
where $\erfi\ (z) \equiv -\mathrm{i}\ \erf\ (\mathrm{i}z)$ and we have defined the function
\begin{equation} \label{eq:ddaw_elliptical}
\begin{split}
	\varsigma (z; q) &\equiv \exp \left( -\frac{q^2 z^2}{2\sigma^2(1 - q^2)}\right) \left[ \erfi \left( \frac{q z}{\sigma \sqrt{2(1-q^2)}} \right) \right. \\
		& \qquad \qquad \qquad \qquad - \left. \erfi \left( \frac{ q^2 x + \mathrm{i} y}{\sigma \sqrt{2(1-q^2)}} \right) \right].
\end{split}
\end{equation}
We obtain the complex conjugate of the complex shear from equation (\ref{eq:complex_shear_definition}) as
\begin{equation} \label{eq:shear_conjugate}
 \begin{split}
	\gamma^*(z) &= -\frac{\kappa_0} {1-q^2} \left[ (1+q^2) \exp \left( - \frac{q^2x^2+y^2} {2\sigma^2} \right) \right.  - 2q \\ 
	&\quad\ + \frac{\sqrt{2\uppi} q^2 z}{\sigma\sqrt{1-q^2}} \exp \left(-\frac{q^2 z^2}{2\sigma^2(1-q^2)}\right) \left\{ \erfi \left( \frac{qz} {\sigma\sqrt{2(1-q^2)}}\right) \right. \\
	&\quad \left. \left. - \erfi \left( \frac{q^2x + \mathrm{i} y} {\sigma\sqrt{2(1-q^2)}}\right) \right\} \right] \\
	&= -\frac{1}{1-q^2} \left[ (1+q^2) \kappa(x, y) - 2q\kappa_0  + \frac{\sqrt{2\uppi} q^2\kappa_0 z}{\sigma \sqrt{1-q^2}}\ \varsigma(z;q) \right] .
 \end{split}
\end{equation}
Both the deflection angle and the shear contain the function $\varsigma(z;q)$. This function relates to the Faddeeva function $w_{\rm F}(z)$. First, we write the function $\varsigma(z;q)$ as
\begin{equation}
	\varsigma(z;q) = \varpi \left(\frac{q z}{\sigma \sqrt{2(1-q^2)}}; 1\right) - \varpi \left(\frac{q z}{\sigma \sqrt{2(1-q^2)}}; q \right),	
\end{equation}
where $\varpi (z;q) = \exp\left(-z^2 \right)\ \rm{erfi} \left(qx + iy/q \right)$. We can express $\varpi(z;q)$ using the Faddeeva function $w_{\rm F}(z)$ as
\begin{equation} \label{eq:daw_elliptical_to_faddeeva}
\begin{split}
	\varpi (z;q) &= 	\mathrm{e}^{-x^2 - 2\mathrm{i}xy} \mathrm{e}^{y^2} - \mathrm{i} \exp\left[ -x^2 (1-q^2) - y^2 (1/q^2 -1) \right] \\
	& \qquad \times w_{\rm F}(q x + \mathrm{i} y/q).
\end{split}
\end{equation}
Thus, we can compute the deflection angle and the shear using the Faddeeva function (Fig. \ref{fig:gaussian_lens}). Faddeeva function is a well-studied special function for its various applications in physics, for example, in radiative transfer and in plasma physics \citep[e.g.,][]{Armstrong67, Jimenez-Dominguez89}. We can readily compute $w_{\rm F}(z)$ in \textsc{python} using the \textsc{scipy.special.wofz} function. For some other popular programming languages, code-packages to compute this function are available at the web-address \url{http://ab-initio.mit.edu/Faddeeva}. In this paper, we use the algorithm outlined by \citet{Zaghloul17} to compute $w_{\rm F}(z)$ with relative error less than $4\times10^{-5}$ over the whole complex plane. We state this algorithm in Appendix \ref{app:alg_985}. A \textsc{python} implementation of this algorithm is about twice as fast as the function provided by \textsc{scipy}.  As a result, we can efficiently compute the gradient and Hessian of the deflection potential for an elliptical Gaussian convergence using equations (\ref{eq:deflection_conjugate}) and (\ref{eq:shear_conjugate}).

\begin{figure}
	\includegraphics[width=\columnwidth]{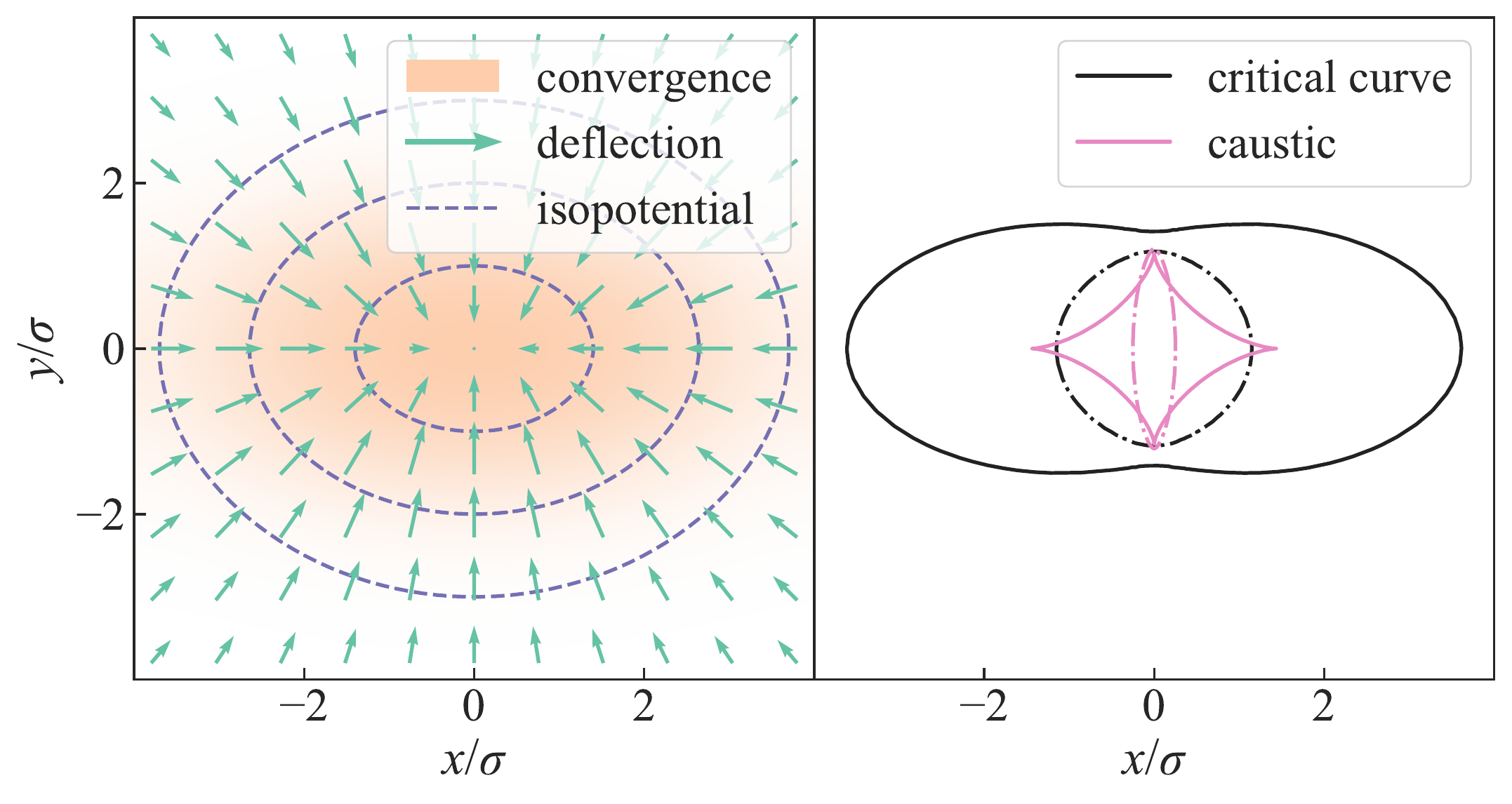}
	\caption{
	Lensing quantities for an elliptical Gaussian convergence profile. \textbf{Left:} convergence  (orange shade), deflection field (green arrows), isopotential contours (blue, dashed contours). The arrow directions are for the negative of the deflection angles and the lengths are shrunk by a factor of 4 for nicer visualization. \textbf{Right:} Critical curves (black lines) and corresponding caustics (pink lines). The solid-contour caustic corresponds to the solid-contour critical curve, similarly dot-dashed contours correspond to each other. Here, we take the amplitude of the Gaussian convergence $\kappa_0=2$ and the axis ratio $q=0.5$. We express the gradient and the Hessian of the deflection potential for an elliptical Gaussian convergence using the complex error function, as a result we can efficiently compute them.
	\label{fig:gaussian_lens}
	}
\end{figure}

Now, we turn our attention to computing the deflection potential. The deflection potential $\psi (z)$ is given by
\begin{equation}  \label{eq:complex_potential_integral} 
	\psi(z) = \re\left(\int^z_{0} \alpha^*(z^\prime)\ \dd z^\prime \right),
\end{equation}
where we set $\psi(0) = 0$. Often times we are interested in the potential difference between two points $z_1$ and $z_2$ given by
\begin{equation} \nonumberoff \label{eq:complex_potential_difference_integral}
	\Delta \psi = \psi (z_2) - \psi (z_1) = \re\left( \int_{z_1}^{z_2} \alpha^*(z^\prime)\ \dd z^\prime \right).
\end{equation}
This integral is independent of the choice of a contour. We have to carry out this integral numerically. However, the number of times we need to compute it in most applications, e.g., for computing time delays, is much fewer than that for $\alpha(z)$. We can also numerically solve the Poisson equation $\nabla^2 \psi = 2\kappa$ using the Fourier transform of the deflection potential $\hat{\psi} \equiv \mathcal{F}[\psi]$ \citep{vandeVen09}. This equation turns into $k^2 \hat{\psi} = 2\hat{\kappa}$ in the Fourier domain. The solution of the Poisson equation is then $\psi = \mathcal{F}^{-1}(2\hat{\kappa}/k^2)$. We can analytically compute the forward Fourier transform because the convergence has the Gaussian form. Then, we need to compute only the inverse transform numerically. Although obtaining the deflection potential necessitates a numerical integration or a numerical Fourier transform, we can keep the computational burden under control in most applications by computing this quantity only for a feasible number of models sampled from the lens model posterior.

\section{Adding it all together: proof of concept} \label{sec:proof_of_concept}

In this section, we demonstrate the feasibility of our method to model lenses. We first simulate synthetic data of a mock lens and then model the lens using our method. We use the publicly available lens-modelling software \textsc{lenstronomy} to simulate the synthetic data and perform the model-fitting \citep{Birrer18}. We added extra modules to \textsc{lenstronomy} to implement the lensing analysis presented in Section \ref{sec:gaussian_lensing}.

For the mock strong lensing system, we adopt an elliptical NFW deflection potential for the dark component and an elliptical Chameleon convergence for the luminous component. We take realistic scale sizes and normalizations for these profiles. We choose the Chameleon profile for two reasons:
\begin{enumerate}[topsep=0pt, leftmargin=20pt]
	\item we can analytically simulate the data with ellipticity in the convergence, and
	\item we know the S\'ersic-profile parameters that approximates the chosen Chameleon profile \textit{a priori}, so we can check the fidelity of our method.
\end{enumerate}
We parameterize the scaling of the NFW profile with two parameters: scale radius $r_{\rm s}$ and the deflection angle $\alpha_{\rm s}$ at $r_{\rm s}$. For a spherical NFW profile given by
\begin{equation} \nonumberoff
	\rho_{\rm NFW} = \frac{\rho_{\rm s}}{(r/r_{\rm s})(1 + r/r_{\rm s})^2},
\end{equation}
the normalization $\rho_s$ relates to $\alpha_s$ as
\begin{equation} \nonumberoff
	\alpha_{\rm s} = \frac{4 \rho_{\rm s} r_{\rm s}^2}{D_{\rm d}\Sigma_{\rm crit}} \left( 1 - \ln 2 \right).
\end{equation}
\citep{Meneghetti03}. The elliptical Chameleon convergence is given by
\begin{equation} \nonumberoff
\begin{split}
	\kappa_{\rm Chm} (x, y) &= \frac{\kappa_0}{1 + q} \left[ \frac{1}{\sqrt{x^2+y^2/q^2 + 4w_{\rm c}^2/(1+q^2)}} \right. \\
	 & \quad \quad \quad \quad \left. - \frac{1}{\sqrt{x^2+y^2/q^2 + 4w_{\rm t}^2/(1+q^2)}} \right]
\end{split}
\end{equation}
\citep{Suyu14}. We also add external shear to the mass profiles. Therefore, our fiducial lens mass profile has three components in total: elliptical NFW deflection potential, elliptical Chameleon convergence, and external shear. 

We simulate data for this fiducial lens system with image quality similar to the \textit{Hubble Space Telescope} (\textit{HST}) Wide-Field Camera 3 imaging in the F160W filter (see top-left panel of Fig. \ref{fig:mock_fit}). We adopt 0.08 arcsec for the pixel size, 2197 s for exposure time, and a realistic point spread function (PSF) to achieve data quality similar to the lens sample presented by \citet{Shajib19}.

We fit the synthetic data with a model composed of elliptical NFW deflection potential, elliptical S\'ersic convergence profile decomposed into Gaussians, and external shear. Note that we take ellipticity in the deflection potential for the NFW profile due to a design restriction of \textsc{lenstronomy}. However, we can also extend an elliptical NFW convergence into Gaussians for lensing analysis in principle. For now, this limitation does not affect the point of this exercise to show that lensing analysis with Gaussian components is feasible. We take the PSF as known during model-fitting for simplicity. We also separately model the lens with the fiducial mass profiles for comparison. In both cases, the parameters for the light and the luminous mass profiles are joint except for the amplitudes. We fit the model to the data by using the MCMC method. For every sample point in the parameter space, we first decompose the elliptical S\'ersic profile into Gaussian components using equations (\ref{eq:inverse_num_approx}) and (\ref{eq:gaussian_expansion_1d}). Similar to the example in Section \ref{subsubsec:forward_transform}, we take 15 Gaussian components with logarithmically spaced $\sigma$'s between $0.02R_{\rm eff}$ and $15 R_{\rm eff}$. These 15 Gaussian components approximate the S\'ersic function well within the noise level for $0.1R_{\rm eff} \leq R \leq 10R_{\rm eff}$ (Fig. \ref{fig:1d_example}). We compute the gradient and the Hessian of the deflection potential for each Gaussian component. Finally, we add the contributions from all the individual components together to obtain these quantities for the total mass profile. These total quantities are used to compute the likelihood in the MCMC method for fitting the model to the data.  
 
 Our method fits the synthetic data very well (see the `Normalized Residuals' plot in Fig. \ref{fig:mock_fit}). The fiducial S\'ersic profile parameters are also recovered with reasonable to high accuracies at the same time (Table \ref{tab:fit_params}). The total runtime is only approximately three times longer than using the fiducial model with the Chameleon profile. This loss in efficiency is a reasonable tradeoff for generality. Thus, we have demonstrated a feasible implementation of our lens-modelling method.

\begin{figure*}
	\includegraphics[width=1\textwidth]{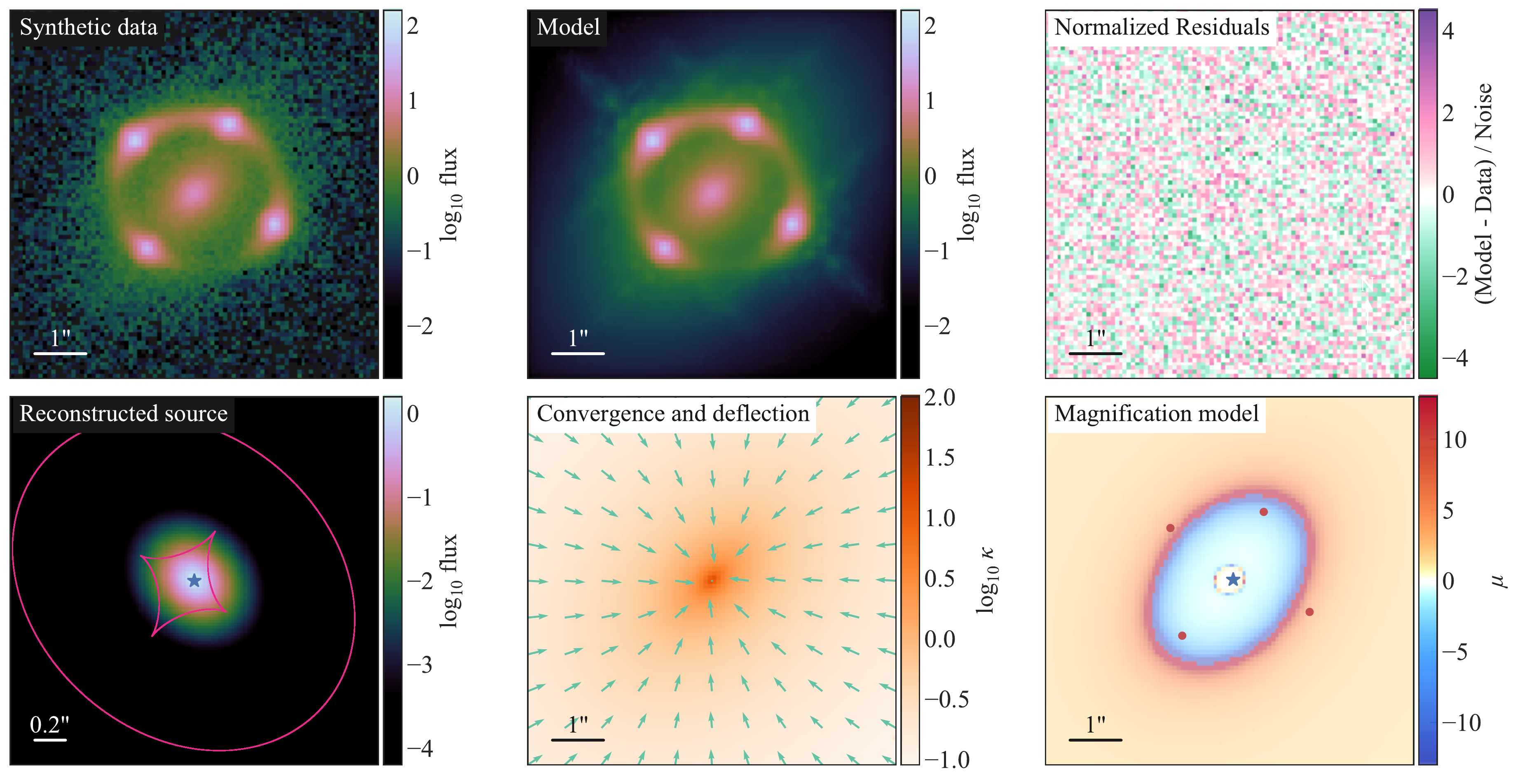}
	\caption{
	Fitting synthetic lensing data with Gaussian components of an elliptical S\'ersic profile for the luminous component. We fit the dark component with an elliptical NFW profile. The S\'ersic parameters for the lens light are joint with the luminous mass distribution except for the amplitudes letting the global mass-to-light ratio be a free parameter. We generated the synthetic data for a composite model with elliptical NFW and elliptical Chameleon profiles. We adopted a simple Gaussian point spread function. In the `Reconstructed source' plot, the pink contours outline the caustics. The blue star indicates the point source position. The green arrows in the `Convergence and deflection' plot represent negative deflection angles and they are shrunk by a factor of 4 for nicer visualization. The red dots in the `Magnification model' plot point out the image positions. Our method of computing lensing quantities for the S\'ersic profile with ellipticity in the convergence works well as evident from the `Normalized Residual' plot. This method only takes approximately three times longer than using the fiducial model with the Chameleon profile. Unlike the Chameleon profile, however, our method is general.
	\label{fig:mock_fit}
	}
\end{figure*}

\renewcommand*{\arraystretch}{1.2}
\begin{table}
	\caption{\label{tab:fit_params} 
		Fidelity of our lens modelling method. We simulate mock data with a fiducial model composed of elliptical Chameleon convergence, elliptical NFW deflection potential, and external shear. We test our method with the ``Gaussian model'': elliptical S\'ersic convergence decomposed into Gaussians, elliptical NFW deflection potential, and external shear. The `True' rows contain the mock values of the fiducial model parameters. The `Gaussian-fit' rows contain the parameters of the ``Gaussian model'' fit to the data. Similarly, the `Fiducial-fit' rows contain the parameters of the fiducial model fit to the data. We do not provide uncertainty for the values that are accurate up to the displayed decimal point. The accuracy of our computational method with Gaussian components is comparable to that using the fiducial model.
	}
	\begin{adjustbox}{width=\linewidth}
	\begin{threeparttable}
	\centering
		\begin{tabularx}{\linewidth}{lcccc}
		\hline
		Mass Profile & \multicolumn{4}{c}{Parameters} \\
		\hline
		S\'ersic & $R_{\rm eff}$ & $n_{\rm \text{S\'ersic}}$ & $q$ & $\phi$ \\
		& (arcsec) & & & (deg) \\
		\quad True\tnote{a} & 1.55 & 3.09 & 0.60 & 45  \\
		\quad Gaussian-fit & 1.51$\pm$0.02 & 3.04$\pm$0.03 & 0.61 & 45.1$\pm$0.3 \\

		Chameleon & $w_{\rm t}$ & $w_{\rm c}$ & $q$ & $\phi$ \\
		& (arcsec) & (arcsec) & & (deg)\\
		\quad True & 0.038 & 1.7 & 0.60 & 45  \\
		\quad Fiducial-fit & 0.038 & 1.7 & 0.60 & 45.0$\pm$0.2  \\
		\\
		NFW & $r_{\rm s}$ & $\alpha_{\rm s}$ & $q$ & $\phi$ \\
		& (arcsec) & (arcsec) & & (deg)\\
		\quad True & 5 & 1 & 0.9 & 45  \\
		\quad Gaussian-fit & 5.0 & 0.99$\pm$0.03 & 0.87$\pm$0.02 & 44$\pm$2 \\
		\quad Fiducial-fit & 5.0 & 1.00$\pm$0.01 & 0.90$\pm$0.01 & 45$\pm$2  \\
		\\
		External shear & $\gamma$ &$\phi$ & -- & --  \\
		& & (deg) & & \\
		\quad True & 0.051 & 5.7 & -- & -- \\
		\quad Gaussian-fit & 0.056$\pm$0.002 & 9$\pm$3 & -- & --  \\
		\quad Fiducial-fit & 0.051$\pm$0.001 & 6$\pm$2 & -- & --  \\
		\hline
	\end{tabularx}
	\begin{tablenotes}
		\item \textit{Notes.} ${}^{\text{a}}$The true values of the S\'ersic-profile parameters correspond to the true values of the Chameleon-profile parameters.
	\end{tablenotes}
	\end{threeparttable}
	\end{adjustbox}
\end{table}

\section{Conclusion: precision is feasible.} \label{sec:conclusion}

In this paper, we present a general method for precise lensing analysis of any elliptical convergence profile. Our method follows a ``divide and conquer'' strategy. In our method, we first decompose an elliptical convergence profile into Gaussian components as
\begin{equation} \label{eq:gauss_decompose_conclusion} \nonumberoff
	\kappa(x, y) \approx \sum_{j=1}^J \kappa_{0j} \exp \left( -\frac{q^2 x^2 + y^2}{2 \sigma_j^2} \right)
\end{equation}
We then compute lensing quantities, e.g., the gradient and the Hessian of the deflection potential, for each Gaussian component. Finally, we add the lensing quantities from individual Gaussian components together to obtain these quantities for the total surface-density profile. Moreover, we can straightforwardly deproject a Gaussian component to obtain its corresponding three-dimensional density profile assuming either axisymmetry or spherical symmetry. Then, we can also compute the kinematic properties, such as the line-of-sight velocity dispersion, for each Gaussian component \citep{Cappellari08}. We can then add the velocity dispersions from individual Gaussians together to obtain the total line-of-sight velocity dispersion. In this way, we self-consistently unify the lensing and kinematic descriptions of any elliptical mass profile.

We introduce an integral transform with a Gaussian kernel that leads us to a general, precise, and fast algorithm for decomposing a surface density profile into Gaussian components. Without such an algorithm, decomposing into Gaussians would end up as a bottleneck in the lens modelling efficiency. We obtain the algorithm by first inverting the integral transform as

\begin{equation} \nonumberoff
	f(\sigma) = \frac{1}{\mathrm{i}\sigma^2} \sqrt{\frac{2}{\uppi}} \int_{C} z F(z) \exp \left( \frac{z^2}{2 \sigma^2} \right) \dd z.
\end{equation}
Although this is an integral, we provide a straightforward formula to compute $f(\sigma)$. The computed values of $f(\sigma)$ then quantify the amplitudes $\kappa_{0j}$ of the Gaussian components in equation (\ref{eq:gauss_decompose_conclusion}). As a result, this integral transform fulfills the three requirements for a decomposition algorithm to be \textbf{(i)} general, \textbf{(ii)} precise, and \textbf{(iii)} fast. To be specific, this decomposition algorithm is $\sim 10^3$ times faster than the MGE algorithm from \citet{Cappellari02}. Consequently, our lensing analysis requires the same order of CPU time as other methods currently in use to model a lens with a composite mass profile. Thus, the integral transform enables the lens modelling with the Gaussian components to be efficient and, in turn, makes our unified framework for lensing and kinematic analysis of an elliptical mass profile feasible.

Our method enables precise lens modelling with an elliptical mass profile for several astrophysical applications. Specifically, our method gives an efficient method to model composite mass profiles with separate components for the baryon and the dark matter. For example, the usual choices for these components are the S\'ersic and the NFW profiles; both are computationally difficult to directly implement in lens modelling for the elliptical case. Our method makes both of these profiles computationally tractable while achieving the required precision. Thus, our method will be useful in applications where a composite mass profile is essential for lens modelling, for example, in  detecting dark-matter substructure, in measuring the Hubble constant, and in testing massive elliptical-galaxy formation theories \citep[e.g.,][]{Vegetti12, Wong17, Nightingale19}.

\section*{Acknowledgements}
AJS thanks Simon Birrer and Shouman Das for helpful discussions. AJS additionally thanks the anonymous referee for very useful suggestions that improved this paper. AJS also expresses gratitude to Adriano Agnello, Simon Birrer, Xuheng Ding, Xinnan Du, Abhimat K. Gautam, Briley Lewis, Michael Topping, and Tommaso Treu for providing feedbacks that greatly improved the writing and the presentation of this paper. AJS acknowledges support by National Aeronautics and Space Administration (NASA) through Space Telescope Science Institute grant HST-GO-15320.

This research made use of \textsc{lenstronomy} \citep{Birrer18}, \textsc{numpy} \citep{Oliphant15}, \textsc{scipy} \citep{Jones01}, \textsc{jupyter} \citep{Kluyver16}, and \textsc{matplotlib} \citep{Hunter07}.



\appendix

\section{Properties of the Integral transform with a Gaussian Kernel} \label{app:theorems}

In this appendix, we prove some fundamental properties of the integral transform with a Gaussian kernel. In three theorems, we prove that 
\begin{enumerate}[topsep=0pt, leftmargin=25pt]
	\item the integral transform exists for a function with certain characteristics,
	\item the transform is unique for a continuous function, and
	\item the transform is invertible.
\end{enumerate}
First, we define the integral transform.

\newtheorem{thm}{Theorem}[section]
\newtheorem{defn}[thm]{Definition}
\begin{defn} \label{def:transform}
Define an integral transform $\mathcal{T}$ that takes a function $f(\sigma):\mathbb{R}_{\geq 0} \to \mathbb{R}$ to a function $F(z):\mathbb{C} \to \mathbb{C}$ as
	\begin{equation} \label{eq:transform}
		F(z) \equiv \mathcal{T}[f] (z) \equiv \frac{1}{\sqrt{2 \uppi}} \int_0^\infty \frac{f(\sigma)}{\sigma} \exp\left( -\frac{z^2}{2\sigma^2}\right) \dd \sigma.
	\end{equation}
\end{defn}
Next, we define a \textit{transformable} function in the context of this paper.
\begin{defn} \label{def:transformable_function}
A function $f(\sigma):\mathbb{R}_{\geq 0} \to \mathbb{R}$ is said to be {transformable}, if it satisfies the following conditions:
	\begin{enumerate}[topsep=0pt, leftmargin=25pt]
		\item the function $f(\sigma)$ is piecewise continuous,
		\item the function $f(\sigma) = \mathcal{O}(\exp(c/2\sigma^2))$ as $\sigma \to 0$, where $c \in  \mathbb{R}_{\geq 0}$,
		\item the function $f(\sigma) = \mathcal{O}(\sigma^\lambda)$ with $\lambda < 0 $ as $\sigma \to \infty$.
	\end{enumerate}
We refer to these three conditions as the \textit{transformability conditions}. The namesake for the transformable function is made clear next in Theorem \ref{thm:transform}.
\end{defn}

\begin{thm}[Existence] \label{thm:transform}
If $f(\sigma)$ is transformable, then its integral transform $F(z)$ exists in the region of convergence (ROC) $\re(z^2) > c$.
\end{thm}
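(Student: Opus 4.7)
The plan is to establish existence by showing the integral converges absolutely on the ROC $\{z\in\mathbb{C}:\re(z^2)>c\}$, splitting the domain of integration into three pieces: a neighbourhood of $0$, a neighbourhood of $\infty$, and the compact middle. The basic observation is that $\lvert\exp(-z^2/2\sigma^2)\rvert=\exp(-\re(z^2)/2\sigma^2)$, so the imaginary part of $z^2$ only contributes oscillation, never growth, and it suffices to control the real part.

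First I would fix $z$ with $\re(z^2)>c$ and set $\delta=\re(z^2)-c>0$. For the tail near $\sigma\to 0$, I would invoke transformability condition (ii): there exist constants $M_0,\sigma_0>0$ such that $\lvert f(\sigma)\rvert\le M_0\exp(c/2\sigma^2)$ for $\sigma\in(0,\sigma_0]$. Then the integrand is bounded in modulus by
\begin{equation*}
\frac{M_0}{\sigma}\exp\!\left(\frac{c-\re(z^2)}{2\sigma^2}\right)=\frac{M_0}{\sigma}\exp\!\left(-\frac{\delta}{2\sigma^2}\right),
\end{equation*}
which decays super-exponentially as $\sigma\to 0^{+}$ and therefore easily dominates the mild $1/\sigma$ factor; a change of variable $u=1/\sigma$ reduces the convergence question to a standard Gaussian integral on $[1/\sigma_0,\infty)$. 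For the tail $\sigma\to\infty$, condition (iii) gives $\lvert f(\sigma)\rvert\le M_\infty\sigma^\lambda$ with $\lambda<0$ on $[\sigma_1,\infty)$, and since $\lvert\exp(-z^2/2\sigma^2)\rvert\le 1$ once $\re(z^2)\ge 0$ (which holds in the ROC because $c\ge 0$), the integrand is dominated by $M_\infty\sigma^{\lambda-1}$, whose integral on $[\sigma_1,\infty)$ converges precisely because $\lambda-1<-1$.

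For the middle piece on the compact interval $[\sigma_0,\sigma_1]$, condition (i) (piecewise continuity) together with continuity of $\sigma\mapsto\exp(-z^2/2\sigma^2)/\sigma$ on this interval gives a uniform bound, hence integrability. Combining the three estimates with the triangle inequality shows $\int_0^\infty\lvert f(\sigma)\exp(-z^2/2\sigma^2)/\sigma\rvert\,\dd\sigma<\infty$, which is absolute convergence of the defining integral; in particular $F(z)$ exists as a complex number for every $z$ in the ROC.

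The only subtle step is the behaviour near $\sigma=0$, where the Gaussian kernel has an essential singularity and $f$ itself is permitted to blow up like $\exp(c/2\sigma^2)$; the transformability constant $c$ is in fact \emph{defined} so that the ROC condition $\re(z^2)>c$ produces a strictly negative exponent, and this is the precise sense in which the boundary of the ROC is dictated by the growth rate of $f$ at the origin. Everything else is routine dominated-integrability bookkeeping. I would not attempt to prove holomorphy of $F$ on the ROC here (though the same estimates, together with Morera's theorem or differentiation under the integral, would give it immediately); Theorem \ref{thm:transform} as stated asks only for existence.
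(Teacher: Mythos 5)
Your proof is correct and follows essentially the same route as the paper: the same three-way split of the integration domain, the same bound $\lvert\exp(-z^2/2\sigma^2)\rvert=\exp(-\re(z^2)/2\sigma^2)$ controlling the piece near $\sigma=0$ via condition (ii), the same use of $\lvert\exp(-z^2/2\sigma^2)\rvert\le 1$ together with condition (iii) at infinity, and piecewise continuity on the compact middle. Your added remarks (the $u=1/\sigma$ substitution and the observation that $c$ is precisely what dictates the ROC boundary) are correct elaborations of steps the paper leaves implicit.
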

\begin{proof}
Divide the integral in equation (\ref{eq:transform}) as
\begin{equation} \nonumberoff
\begin{split}
F(z) &= \frac{1}{\sqrt{2 \uppi}} \left( \int_0^a \dd \sigma + \int_a^b \dd \sigma + \int_b^\infty \dd \sigma \right) \frac{f(\sigma)}{\sigma} \exp\left( -\frac{z^2}{2\sigma^2}\right) \\
&= \frac{1}{\sqrt{2 \uppi}} \left( \mathcal{I}_1 + \mathcal{I}_2 + \mathcal{I}_3 \right),
\end{split}
\end{equation}
where $0<a<b<\infty$.
\begin{enumerate}[leftmargin=15pt]
	\item The integral $\mathcal{I}_2$ converges, as the integrand is piecewise continuous in $[a, b]$ according to transformability condition (i) in Definition \ref{def:transformable_function}.
	\item According to transformability condition (ii), there exists $M_1 \in \mathbb{R}_{>0}$ such that $f(\sigma) \leq M_1 \exp (c/2\sigma^2)$ for $\sigma \leq a$. Then using Jensen's inequality, we have
	\begin{equation} \label{eq:convergence_at_zero}
	\begin{split}
	\abs{\mathcal{I}_1} &\leq \int_0^a \abs{\frac{f(\sigma)}{\sigma} \exp\left( -\frac{z^2}{2\sigma^2}\right)} \dd \sigma \\
	\Rightarrow \abs{\mathcal{I}_1} &\leq \int_0^a \abs{\frac{M_1}{\sigma} \exp\left( - \frac{z^2 - c}{2\sigma^2}\right)} \dd \sigma  \\
	  \Rightarrow \abs{\mathcal{I}_1} &\leq M_1 \int_0^a \frac{1}{\sigma} \exp \left(-\frac{\re (z^2) - c}{2 \sigma^2} \right) \dd \sigma.
	\end{split}
	\end{equation}
Therefore, the integral $\mathcal{I}_1$ converges in the region $\re (z^2) = x^2 - y^2 > c$, where $z = x + \mathrm{i}y$, $x \in \mathbb{R}, y \in \mathbb{R}$.
	\item According to transformability condition (iii), there exists $M_2 \in \mathbb{R}_{>0}$ such that $f(\sigma) \leq M_2 \sigma^{\lambda}$ for $\sigma \geq b$. Then, we have
	\begin{equation} \label{eq:convergence_at_infinity} \nonumberoff
	\begin{split}
		\abs{\mathcal{I}_3} &\leq \int_b^\infty \abs{\frac{f(\sigma)}{\sigma} \exp \left( - \frac{z^2}{2 \sigma^2} \right) } \dd \sigma \\
		\Rightarrow \abs{\mathcal{I}_3} &\leq \int_b^\infty M_2 \sigma^{\lambda-1} \abs{\exp \left( - \frac{z^2}{2 \sigma^2}\right)} \dd \sigma \\
		\Rightarrow \abs{\mathcal{I}_3} &\leq M_2 \int_b^\infty \sigma^{\lambda-1} \dd \sigma = -M_2\frac{b^{\lambda}}{\lambda} < \infty.
	\end{split}
	\end{equation}
Here, we applied the inequality $\abs{\exp (-z^2/2\sigma^2)} \leq 1$ for $\re (z^2) > c \geq 0$. As a result, the integral $\mathcal{I}_3$ converges.
\end{enumerate}
Therefore, the transform $F(z)$ exists in the ROC $\re(z^2) > c$.
\end{proof}

Fig. \ref{fig:roc} shows the ROC for the integral in equation (\ref{eq:transform}). We can extend the ROC by the following two corollaries.

\begin{figure}
\includegraphics[width=0.5\textwidth]{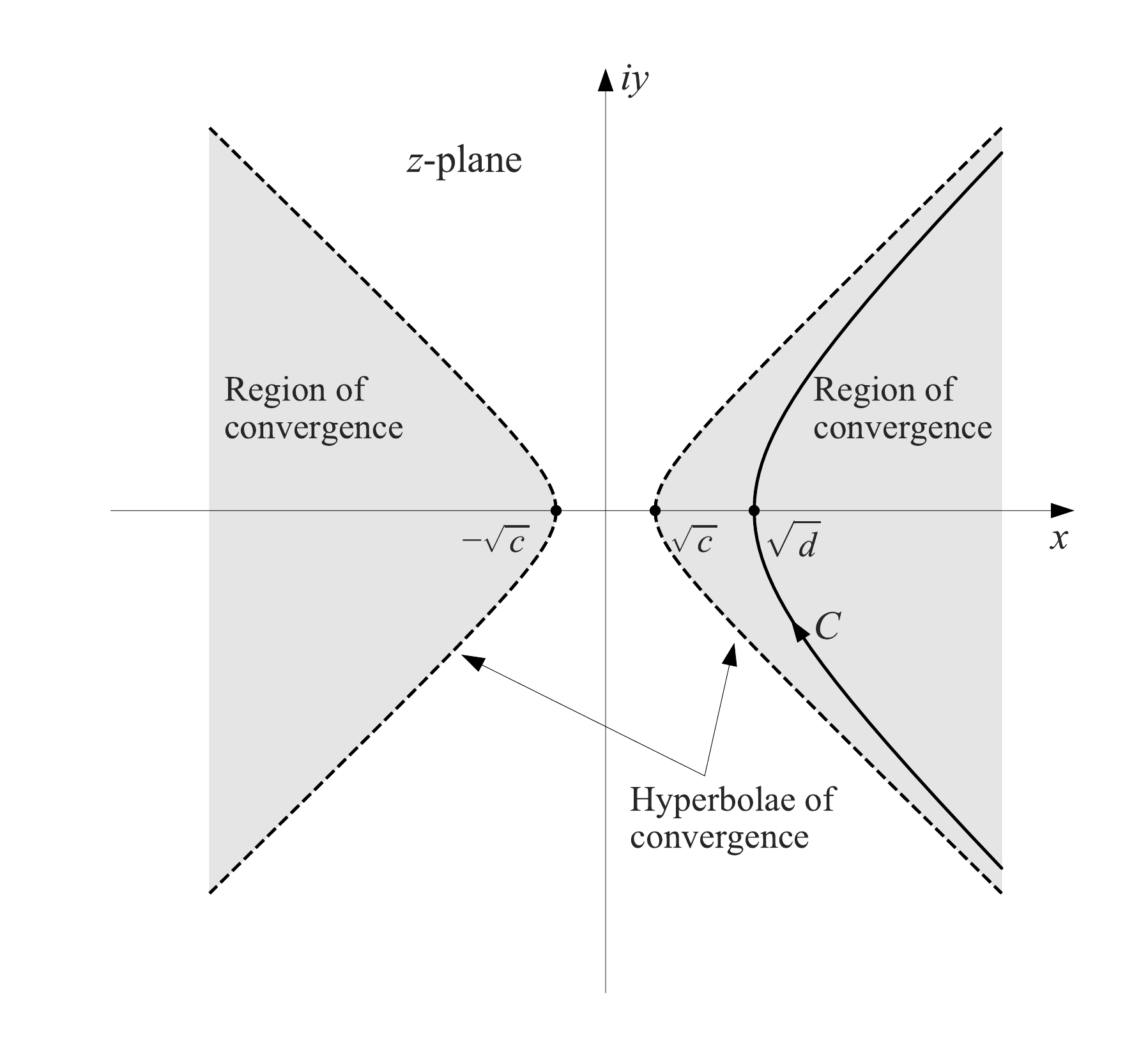}
\caption{Region of convergence (shaded region) on the complex plane for the integral in equation (\ref{eq:transform}). The hyperbolic contour $C$ for the integral in equation (\ref{eq:inverse_transform}) is shown with solid black hyperbola.
	\label{fig:roc}
}
\end{figure}

\newtheorem{crl}[thm]{Corollary}
\begin{crl} \label{crl:equality_transform}
If a transformable function $f(\sigma)$ additionally satisfies the condition $f(\sigma) = \mathcal{O} (\sigma^\beta \exp(c/2\sigma^2))$ with $\beta \geq 1$ as $\sigma \to 0$, then the integral in equation (\ref{eq:transform}) converges in the ROC $\re (z^2) \geq c$.
\end{crl}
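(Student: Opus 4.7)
The plan is to mimic the structure of the proof of Theorem \ref{thm:transform}, but tighten only the estimate on $\mathcal{I}_1$; the analyses of $\mathcal{I}_2$ and $\mathcal{I}_3$ are unchanged and carry over verbatim, since neither uses any behavior of $f$ near $\sigma = 0$. So I would split the integral at the same cut points $0 < a < b < \infty$ and only revisit the small-$\sigma$ piece in the ROC closure $\re(z^2) \geq c$.

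For $\mathcal{I}_1$, the extra factor $\sigma^{\beta}$ in the refined asymptotic bound is the whole point: by hypothesis there exists $M_1 \in \mathbb{R}_{>0}$ such that $|f(\sigma)| \leq M_1\, \sigma^{\beta} \exp(c/2\sigma^2)$ for $\sigma \leq a$. I would then follow exactly the manipulation in equation (\ref{eq:convergence_at_zero}), obtaining
\begin{equation*}
\abs{\mathcal{I}_1} \leq M_1 \int_0^a \sigma^{\beta-1} \exp\!\left(-\frac{\re(z^2) - c}{2\sigma^2}\right) \dd \sigma.
\end{equation*}
On the closed region $\re(z^2) \geq c$ the exponential factor is bounded above by $1$, so the integrand is dominated by $\sigma^{\beta-1}$, which is integrable on $[0, a]$ as soon as $\beta \geq 1$ (in fact the bound evaluates to $M_1 a^{\beta}/\beta$). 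Hence $\mathcal{I}_1$ converges absolutely throughout the closed region.

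Combining this with the unchanged bounds on $\mathcal{I}_2$ and $\mathcal{I}_3$, the transform $F(z)$ exists on $\re(z^2) \geq c$, extending the ROC of Theorem \ref{thm:transform} to include its boundary. The only mildly delicate point — and the one I would state carefully — is that the hypothesis $\beta \geq 1$ is doing exactly the work of making $\sigma^{\beta - 1}$ integrable at the origin, independently of the complex argument $z$; this uniformity is what lets the ROC close up rather than merely extend in individual directions. No appeal to dominated convergence or analytic continuation is needed, so I do not anticipate any real obstacle; the corollary is essentially a bookkeeping refinement of the existence proof.
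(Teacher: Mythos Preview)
Your proposal is correct and matches the paper's own proof essentially line for line: both revisit only $\mathcal{I}_1$, use the refined bound $|f(\sigma)| \le M\sigma^{\beta}\exp(c/2\sigma^{2})$ near the origin to replace the integrand by $\sigma^{\beta-1}\exp\!\big(-(\re(z^{2})-c)/2\sigma^{2}\big)$, and then observe that for $\beta \ge 1$ the resulting integral is bounded by $M a^{\beta}/\beta$. The only cosmetic difference is that the paper writes out the boundary case $\re(z^{2}) = c$ explicitly (the open interior being already covered by Theorem~\ref{thm:transform}), whereas you bound the exponential by $1$ uniformly on the closed region; the content is identical.
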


\begin{proof}
According to the additional condition, there exists $ M_3 \in \mathbb{R}_{>0}$ such that $f(\sigma) \leq M_3 \sigma^{\beta} \exp(c/2\sigma^2)$ for $\sigma \leq a$. Then, we can rewrite equation (\ref{eq:convergence_at_zero}) as
\begin{equation} \label{eq:corollary_1} \nonumberoff
	\begin{split}
	\abs{\mathcal{I}_1} &\leq \int_0^a \abs{\frac{f(\sigma)}{\sigma} \exp\left( -\frac{z^2}{2\sigma^2}\right)} \dd \sigma \\
	\Rightarrow \abs{\mathcal{I}_1} &\leq \int_0^a \abs{M_3 \sigma^{\beta - 1} \exp\left( - \frac{z^2 - c}{2\sigma^2}\right)} \dd \sigma  \\
	  \Rightarrow \abs{\mathcal{I}_1} &\leq M_3 \int_0^a \sigma^{\beta-1} \exp \left(-\frac{\re (z^2) - c}{2 \sigma^2} \right) \dd \sigma.
	\end{split}
\end{equation}
For $\re (z^2) = c$, this becomes
\begin{equation} \nonumberoff
\begin{split}
\abs{\mathcal{I}_1} &\leq M_3 \int_0^a \sigma^{\beta - 1} \dd \sigma = \frac{M_3 a^{\beta}}{\beta} < \infty.
\end{split}
\end{equation}
Therefore, the ROC for the integral in equation (\ref{eq:transform}) extends to $\re (z^2) \geq c$.
\end{proof}

\begin{crl} \label{crl:at_zero_transform}
If a transformable function $f(\sigma)$ additionally satisfies $f(\sigma) = \mathcal{O} (\sigma^\beta)$ with $\beta \geq 1$ as $\sigma \to 0$, then the integral in equation (\ref{eq:transform}) converges in the ROC $\re (z^2) \geq 0$.
\end{crl}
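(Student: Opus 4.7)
The plan is to mirror the argument of Corollary \ref{crl:equality_transform} specialized to $c = 0$. Observe first that the new hypothesis $f(\sigma) = \mathcal{O}(\sigma^\beta)$ with $\beta \geq 1$ as $\sigma \to 0$ is strictly stronger than transformability condition (ii): it forces $f(\sigma) \to 0$ near the origin rather than merely $f(\sigma) = \mathcal{O}(\exp(c/2\sigma^2))$. We can therefore effectively take the constant $c$ from Theorem \ref{thm:transform} to be zero, and the only real task is to show that the integral in equation (\ref{eq:transform}) converges on the closed region $\re(z^2) \geq 0$, including the boundary set $\re(z^2) = 0$.

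First I would split the integral exactly as in the proof of Theorem \ref{thm:transform} into three pieces $\mathcal{I}_1$, $\mathcal{I}_2$, $\mathcal{I}_3$ over $[0, a]$, $[a, b]$, and $[b, \infty)$ for some $0 < a < b < \infty$. The arguments for $\mathcal{I}_2$ and $\mathcal{I}_3$ carry over verbatim and do not use the behavior of $f$ near zero: $\mathcal{I}_2$ converges by piecewise continuity on a compact interval, and $\mathcal{I}_3$ converges by transformability condition (iii) together with the bound $\abs{\exp(-z^2/2\sigma^2)} \leq 1$, which continues to hold throughout $\re(z^2) \geq 0$.

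The substantive step is the bound on $\mathcal{I}_1$. Using the new hypothesis, I would pick $M \in \mathbb{R}_{>0}$ with $\abs{f(\sigma)} \leq M \sigma^\beta$ on $[0, a]$, so that the integrand in $\mathcal{I}_1$ is dominated by $M \sigma^{\beta - 1} \exp(-\re(z^2)/2\sigma^2)$. For $\re(z^2) \geq 0$ the exponential factor is at most $1$, leaving $\abs{\mathcal{I}_1} \leq M \int_0^a \sigma^{\beta - 1}\, \dd \sigma = M a^\beta/\beta$, which is finite.

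The only delicate point — and the main obstacle worth flagging — is that $\beta \geq 1$ is exactly the sharp condition needed for $\int_0^a \sigma^{\beta - 1}\, \dd \sigma$ to converge at the lower endpoint; any $\beta < 1$ would make this bound fail precisely on the new boundary set $\re(z^2) = 0$, and the argument would collapse back to the open ROC of Theorem \ref{thm:transform}. With this observation the result follows as a direct specialization of Corollary \ref{crl:equality_transform} to $c = 0$, with no additional machinery required.
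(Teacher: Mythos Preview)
Your proposal is correct and follows exactly the paper's approach: the paper's proof consists of the single sentence ``The proof is trivial with the substitution $c = 0$ in Corollary \ref{crl:equality_transform},'' which is precisely the specialization you carry out in detail. Your expanded discussion of the $\mathcal{I}_1$ bound and the sharpness of $\beta \geq 1$ is accurate and more explicit than the paper, but the underlying argument is identical.
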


\begin{proof} The proof is trivial with the substitution $c = 0$ in Corollary \ref{crl:equality_transform}.
\end{proof}

Next we prove the uniqueness theorem for the transform. First, we state a well-known proof of the following lemma for completeness.

\newtheorem{lma}[thm]{Lemma}
	\begin{lma} \label{lma:zero_function}
		If $f(x)$ is continuous in $[0, 1]$, and $\int_0^1 x^n f(x)\ \dd x = 0$ for $n=0,1,2,\dots$, then $f(x)=0$.
	\end{lma}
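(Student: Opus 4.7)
The plan is to use the Weierstrass approximation theorem together with a standard $L^2$-norm argument. The hypothesis says that $f$ is orthogonal in the $L^2$ sense to every monomial $x^n$, hence by linearity of the integral to every polynomial $p(x)$, i.e.\ $\int_0^1 p(x) f(x)\, \dd x = 0$. The Weierstrass approximation theorem then guarantees that for any $\varepsilon>0$ there exists a polynomial $p_\varepsilon$ such that $\sup_{x\in[0,1]}|f(x)-p_\varepsilon(x)|<\varepsilon$, which should let me pass from orthogonality against polynomials to orthogonality of $f$ against itself.

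Concretely, first I would write
\begin{equation*}
\int_0^1 f(x)^2\, \dd x \;=\; \int_0^1 f(x)\bigl(f(x)-p_\varepsilon(x)\bigr)\, \dd x \;+\; \int_0^1 f(x) p_\varepsilon(x)\, \dd x.
\end{equation*}
The second term vanishes by the hypothesis (applied term-by-term to the monomials comprising $p_\varepsilon$). The first term is bounded in absolute value by $\varepsilon \int_0^1 |f(x)|\, \dd x \leq \varepsilon M$, where $M = \max_{x\in[0,1]}|f(x)|$ is finite because $f$ is continuous on the compact interval $[0,1]$. Since $\varepsilon$ is arbitrary, we conclude $\int_0^1 f(x)^2\, \dd x = 0$.

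Finally, because $f^2$ is continuous and non-negative on $[0,1]$, the vanishing of its integral forces $f(x)^2 \equiv 0$, hence $f \equiv 0$. The only genuinely non-trivial ingredient is the Weierstrass approximation theorem, which I will invoke as a standard result; everything else is bookkeeping with the triangle inequality and the observation that a non-negative continuous function with zero integral must vanish identically. No obstacle of substance is anticipated.
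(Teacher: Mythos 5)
Your proof is correct and follows essentially the same route as the paper's: Weierstrass approximation, orthogonality to polynomials, passage to $\int_0^1 f(x)^2\,\dd x = 0$, and the vanishing of a non-negative continuous function with zero integral. If anything, your explicit triangle-inequality bound $\bigl|\int_0^1 f(x)(f(x)-p_\varepsilon(x))\,\dd x\bigr| \leq \varepsilon M$ makes the limit step more transparent than the paper's terse ``taking the limit $\epsilon \to 0$.''
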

	\begin{proof}
		From the Weierstrass approximation theorem, for any $ \epsilon>0$, there exists a polynomial $P_{\epsilon}(x)$ such that $\abs{f(x)-P_{\epsilon}(x)}< \epsilon$ for all $x \in [0,1]$. The hypothesis implies that $\int_0^1 P_{\epsilon}(x) f(x)\ \dd x  = 0$. By taking the limit $\epsilon \to 0$, this equation becomes $\int_0^1 f(x)f(x)\  \dd x = 0$. As $f(x)^2 \geq 0$, we have $f(x)=0$. 
	\end{proof}
	
\begin{thm}[Uniqueness] \label{thm:uniqueness}
If $f(\sigma)$ and $g(\sigma)$ are continuous, and $\mathcal{T}[f](z) = \mathcal{T}[g](z)$ for all $z$ in the ROC, then $f(\sigma)=g(\sigma)$.
\end{thm}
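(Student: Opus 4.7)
The plan is to reduce the claim to Lemma~\ref{lma:zero_function} via a change of variables that converts the Gaussian kernel into a monomial kernel on $[0,1]$. By linearity of $\mathcal{T}$, set $h\equiv f-g$; the hypothesis gives $\mathcal{T}[h](z)=0$ throughout the ROC $\re(z^2)>c$, while $h$ inherits continuity and transformability from $f$ and $g$. Then substitute $u=\exp(-1/(2\sigma^2))$, which maps $\sigma\in(0,\infty)$ bijectively onto $u\in(0,1)$. Under this substitution $\exp(-z^2/(2\sigma^2))=u^{z^2}$ and $d\sigma/\sigma=-du/(2u\ln u)$, so equation~(\ref{eq:transform}) becomes
\begin{equation*}
\mathcal{T}[h](z)=\frac{1}{\sqrt{2\uppi}}\int_0^1 u^{z^2}\,\psi(u)\,du,\qquad \psi(u)\equiv\frac{h\!\left(1/\sqrt{-2\ln u}\right)}{-2u\ln u}.
\end{equation*}

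Evaluating along the real ray $z=\sqrt{n+c+1}$ for $n=0,1,2,\dots$ (all lying in the ROC) then produces the sequence of moment conditions
\begin{equation*}
\int_0^1 u^n\,\Psi(u)\,du=0,\qquad \Psi(u)\equiv u^{c+1}\psi(u)=\sigma^{2}\exp\!\left(-\frac{c}{2\sigma^2}\right)h(\sigma),
\end{equation*}
where in the rightmost expression $\sigma=1/\sqrt{-2\ln u}$. The final step is to apply Lemma~\ref{lma:zero_function} to $\Psi$, concluding $\Psi\equiv 0$, hence $\psi\equiv 0$ on $(0,1)$, hence $h\equiv 0$ on $(0,\infty)$, with continuity of $h$ extending the conclusion to $\sigma=0$.

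To invoke Lemma~\ref{lma:zero_function} one must verify continuity of $\Psi$ on $[0,1]$. At $u=0$ (i.e.\ $\sigma\to 0$), transformability condition~(ii) gives $h(\sigma)=\mathcal{O}(\exp(c/(2\sigma^2)))$, which is exactly cancelled by the factor $\exp(-c/(2\sigma^2))$, so $\Psi(u)=\mathcal{O}(\sigma^{2})\to 0$; this is why the particular choice of the extra power $u^{c+1}$ rather than some other normalization is natural. The principal obstacle I anticipate lies at the other endpoint $u=1$ ($\sigma\to\infty$): condition~(iii) only provides $h(\sigma)=\mathcal{O}(\sigma^{\lambda})$ for \emph{some} $\lambda<0$, so $\Psi(u)=\mathcal{O}(\sigma^{\lambda+2})$, which can fail to be bounded when $-2<\lambda<0$. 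In that regime the argument must pass through the immediate $L^{1}[0,1]$ strengthening of Lemma~\ref{lma:zero_function}: the condition $\lambda<0$ still makes $\Psi$ integrable on $[0,1]$, and since $\Psi$ is continuous on $(0,1)$, the Weierstrass-approximation argument used in the lemma, combined with the density of $C[0,1]$ in $L^{1}[0,1]$, still forces $\Psi=0$ almost everywhere, hence everywhere on $(0,1)$ by continuity. The rest — the change-of-variables computation, the endpoint analysis at $u=0$, and the existence of $\mathcal{T}[h]$ along the chosen real ray — is routine given Theorem~\ref{thm:transform} and the transformability conditions already at hand.
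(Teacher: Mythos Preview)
Your proposal is correct and follows essentially the same route as the paper: reduce to $h=f-g$ by linearity, substitute $u=\exp(-1/2\sigma^{2})$, evaluate at $z^{2}=c+n+1$ (the paper uses a generic $d>c$ in place of $c$) to obtain vanishing moments on $[0,1]$, and invoke Lemma~\ref{lma:zero_function}. Your endpoint analysis at $u\to 1$ is in fact more scrupulous than the paper's, which checks only the $s\to 0$ limit and tacitly applies Lemma~\ref{lma:zero_function} without addressing the possible blow-up $\Psi=\mathcal{O}(\sigma^{\lambda+2})$ when $-2<\lambda<0$; your $L^{1}$ patch is the right remedy.
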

\begin{proof}
	Due to linearity, it is sufficient to prove that if $\mathcal{T}[f](z) = 0$, then $f(\sigma)=0$. Take $d$ such that the contour $\re (z^2) = d$ lies in the ROC. By making the change of variables $s=\exp(-1/2\sigma^2)$, for $z^2=d+n+1$ with $n=0,1,2,\dots$, we have
	\begin{equation} \nonumberoff
		\begin{split}
			&\mathcal{T}[f](z) = \frac{1}{\sqrt{2\uppi}} \int_0^\infty \frac{f(\sigma)}{\sigma} \exp \left( -\frac{d+n+1}{2\sigma^2}\right) \dd \sigma = 0 \\
			\Rightarrow &\int_0^1 \left[ - \frac{s^d f(\sqrt{-1/2\log s})}{2 \sqrt{2\uppi} \log s} \right] s^n \dd s = 0.
		\end{split}
	\end{equation}
	This integral exists as $s \to 0$, because
	\begin{equation} \nonumberoff
		\lim_{s \to 0}	\left[ - \frac{s^d f(\sqrt{-1/2\log s})}{2 \sqrt{2\uppi} \log s} \right] = \lim_{\sigma \to 0} \left[ \frac{\sigma^2 f(\sigma)}{\sqrt{2 \uppi}} \exp \left(- \frac{d}{2\sigma^2} \right) \right] = 0.
	\end{equation}
	Therefore, according to Lemma \ref{lma:zero_function}, we have $f(\sigma)=0$.
\end{proof}

\begin{thm}[Inversion] \label{thm:inverse_transform}
If $F(z)$ is the transform of $f(\sigma)$, then $f(\sigma)$ is given by the inverse transform
\begin{equation} \label{eq:inverse_transform}
	f(\sigma)= \mathcal{T}^{-1}[F](\sigma) = \frac{1}{\mathrm{i}\sigma^2} \sqrt{\frac{2}{\uppi}} \int_{C} z F(z) \exp \left( \frac{z^2}{2 \sigma^2} \right) \dd z,
\end{equation}
where the contour $C$ is the hyperbola $\re (z^2) = d$ such that $C$ lies in ROC of $F(z)$.
\end{thm}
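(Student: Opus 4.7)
The plan is to reduce the claim to the classical Bromwich inversion of the Laplace transform, exploiting the fact noted earlier in the paper that equation (\ref{eq:transform}) is, up to a change of variables, a Laplace transform. Specifically, setting $u = z^2/2$ and $v = 1/\sigma^2$ turns the Gaussian kernel $\exp(-z^2/(2\sigma^2))$ into the Laplace kernel $\exp(-uv)$. After using $\dd\sigma = -v^{-3/2}/2\, \dd v$ and $1/\sigma = v^{1/2}$, equation (\ref{eq:transform}) becomes
\begin{equation*}
F(z) \;=\; \int_0^\infty h(v)\, e^{-uv}\, \dd v, \qquad h(v) \equiv \frac{f(v^{-1/2})}{2\sqrt{2\uppi}\, v},
\end{equation*}
so $F$ evaluated at $z$ equals the Laplace transform of $h$ evaluated at $u = z^2/2$. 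Theorem \ref{thm:transform} then shows that the Laplace region of convergence for $h$ is exactly the half-plane $\re(u) > c/2$, the image of the hyperbolic ROC $\re(z^2) > c$ under $z \mapsto z^2/2$.

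Next, I would invoke the Bromwich formula to recover $h(v)$ as a contour integral of $\mathcal{L}[h](u)\, e^{uv}$ along any vertical line $\re(u) = d/2$ lying in the Laplace ROC, where $d$ is the value named in the theorem statement. Pulling back to the $z$-plane via $u = z^2/2$ supplies $\dd u = z\, \dd z$, maps the vertical line to the (right branch of the) hyperbola $\re(z^2) = d$ with upward orientation, and turns $e^{uv}$ into $\exp(z^2/(2\sigma^2))$ upon setting $v = 1/\sigma^2$. Using the identity $h(1/\sigma^2) = \sigma^2 f(\sigma)/(2\sqrt{2\uppi})$ and collecting constants through $\sqrt{2\uppi}/\uppi = \sqrt{2/\uppi}$ then yields precisely equation (\ref{eq:inverse_transform}).

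The main obstacle I expect is the geometric bookkeeping associated with the two-to-one map $z \mapsto z^2/2$: one must choose a branch (say $\re(z) > 0$) so that the vertical Bromwich contour lifts to a single traversal of one branch of the hyperbola $\re(z^2) = d$ in the correct direction, as sketched in Fig.~\ref{fig:roc}. A subsidiary issue is to verify that the standard hypotheses of the Bromwich inversion theorem (absolute convergence on the contour and appropriate decay of $\mathcal{L}[h](u)$ at infinity) actually hold for $h$; these follow from the estimates already produced in the proof of Theorem \ref{thm:transform}, since the growth bounds in Definition \ref{def:transformable_function} control the behaviour of $h$ near both $0$ and $\infty$, and hence the behaviour of the transform along the contour.
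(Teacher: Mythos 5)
Your proposal is correct and follows essentially the same route as the paper: the paper likewise substitutes $p=1/\sigma^{2}$ to turn $\mathcal{T}$ into a Laplace transform (its equation (\ref{eq:laplace_transform})), inverts it by hand via the Fourier inversion theorem applied to $h(p)=g(p)e^{-dp/2}\in L^{1}(\mathbb{R})$ (which is exactly the Bromwich formula you invoke), and then maps the vertical line back to the hyperbola $\re(z^{2})=d$ through $z^{2}=d+2\mathrm{i}\nu$. The branch/orientation bookkeeping and the constant check $\sqrt{2\uppi}/\uppi=\sqrt{2/\uppi}$ you describe match the paper's computation.
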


\begin{proof}
Write equation (\ref{eq:transform}) for $z^2 = d$ as
\begin{equation}\nonumberoff
	F\left(\sqrt{d}\right) = \frac{1}{\sqrt{2 \uppi}} \int_0^\infty \frac{f(\sigma)}{\sigma} \exp\left( -\frac{d}{2\sigma^2}\right) \dd \sigma .
\end{equation}
With the change of variables $p = 1/\sigma^2$, this equation transforms into
\begin{equation} \label{eq:laplace_transform}
	F\left(\sqrt{d} \right) = \int_0^\infty g(p) \exp \left( -\frac{dp}{2} \right) \dd p,
\end{equation}
where
\begin{equation} \nonumberoff
g(p) = \frac{\sigma^2 f(\sigma)}{2 \sqrt{2 \uppi}}.
\end{equation}
Define a new function
\begin{equation} \nonumberoff
h(p) = 
     \begin{cases}
       g(p) \exp \left( -\frac{dp}{2}\right), & p \geq 0,\\
       0, &p < 0. \\ 
     \end{cases}
\end{equation}
Theorem \ref{thm:transform} implies that $\int_{-\infty}^\infty \abs{h(p)} \ \dd p < \infty $, thus $h(p)$ belongs to the Lebesgue space $L^1 (\mathbb{R})$. Therefore, we can take the Fourier transform of $h(p)$ as
\begin{equation} \nonumberoff
\begin{split}	
	\hat{h}(\nu) &= \frac{1}{\sqrt{2 \uppi}} \int_{-\infty}^{\infty} h(p) \mathrm{e}^{-\mathrm{i}p\nu} \dd p \\
	&=\frac{1}{\sqrt{2 \uppi}} \int_{0}^{\infty} g(p) \exp \left( -\frac{(d+2\mathrm{i}\nu)p}{2}\right) \dd p \\
	&= \frac{1}{\sqrt{2 \uppi}} F\left( \sqrt{d+2\mathrm{i}\nu} \right),
\end{split}
\end{equation}
where we used equation (\ref{eq:laplace_transform}) for the substitution in the last line. Now, take the inverse Fourier transform of $\hat{h}(\nu)$ as 
\begin{equation} \nonumberoff
\begin{split}
	&h(p) = \frac{1}{\sqrt{2\uppi}} \int_{-\infty}^{\infty} \hat{h} (\nu) e^{\mathrm{i} p \nu} \dd \nu \\
	\Rightarrow &g(p) \exp \left( - \frac{dp}{2} \right) = \frac{1}{2 \uppi} \int_{-\infty}^{\infty} F\left( \sqrt{d + 2 \mathrm{i} \nu} \right) e^{\mathrm{i} p \nu} \dd \nu \\
	\Rightarrow &\frac{\sigma^2 f(\sigma)}{2 \sqrt{2 \uppi}} = \frac{1}{2 \uppi} \int_{-\infty}^{\infty} F\left( \sqrt{d + 2 \mathrm{i} \nu} \right) \exp \left(\frac{d + 2 \mathrm{i} \nu}{2\sigma^2} \right) \dd \nu \\
	\Rightarrow &f(\sigma) = \frac{1}{\mathrm{i}\sigma^2} \sqrt{\frac{2}{\uppi}} \int_{C} z F(z) \exp \left( \frac{z^2}{2 \sigma^2} \right) \dd z.
\end{split}
\end{equation}
Here, we used the substitution of variable $z^2 = d+2 \mathrm{i} \nu$ in the last line, which transforms the integral path to the hyperbolic contour $C$ given by $\re (z^2) = d$.
\end{proof}

\newtheorem{rmk}[thm]{Remark}
\begin{rmk} \label{rmk:laplace}
Equation (\ref{eq:laplace_transform}) has the form of a Laplace transform. Therefore, the integral transform with a Gaussian kernel for a transformable function can be converted into a Laplace transform by suitable change of variables.
\end{rmk}

\section{Efficient algorithm to compute the Faddeeva function} \label{app:alg_985}
In this appendix, we state an efficient algorithm to compute the Faddeeva function $w_{\rm F}(z)$ \citep[Algorithm \ref{alg:alg_985}, for details see][]{Zaghloul17}. The relative error of this algorithm is less than  $4\times 10^{-5}$ over the whole complex plane.

\algnewcommand{\newAnd}{\text{ and }}
\algnewcommand{\Or}{\text{ or }}
\begin{algorithm}
	\caption{Compute $w_{\rm F}(z)$}
	\label{alg:alg_985}
	\begin{algorithmic}
		\State $y \gets \im (z)$ 
		\If {$\abs{z}^2 \geq 3.8\times 10^{4}$}
	    	\State $w_{\rm F} \gets \dfrac{\rm i}{ z \sqrt{\uppi}}$
		\ElsIf {$3.8\times 10^{4} > \abs{z}^2 \geq 256$}
    	    \State $w_{\rm F} \gets \dfrac{{\rm i} z} {\sqrt{\uppi} (z^2 - 0.5)}$ 
    	\ElsIf {$256 > \abs{z}^2 \geq 62$}
    	    \State $w_{\rm F} \gets \dfrac{{\rm i} (z^2 - 1) }{ z\sqrt{\uppi} (z^2 - 1.5)}$
    	\ElsIf {$62 > \abs{z}^2 \geq 30 \And y^2 \geq 10^{-13}$}
    	    \State $w_{\rm F} \gets \dfrac{{\rm i} z (z^2 - 2.5) }{ \sqrt{\uppi} (z^2(z^2 - 3)+0.75)}$
    	\ElsIf {$(62 > \abs{z}^2 \geq 30 \newAnd y^2 < 10^{-13})\ \Or\ (30>\abs{z}^2\geq2.5 \newAnd y^2<0.072)$}
    	    \State $ {U} \gets [1.320522,\ 35.7668,\ 219.031, 1540.787$\newline
    	    		\hspace*{3.85em} $ 3321.990,\ 36183.31]$
    	    \State $ {V} \gets [1.841439,\ 61.57037,\ 364.2191,\ 2186.181$\newline
    	    		\hspace*{3.85em} $9022.228,\ 24322.84,\ 32066.6]$
    	    \State $w_{\rm F} \gets \exp(-z^2) + {\rm i}z \times (U[6]+z^2(U[5]+z^2(U[4]+z^2(U[3]+$\newline\hspace*{4.3em}$z^2(U[2]+z^2(U[1]+z^2\sqrt{\uppi}))))))/(V[7]+z^2(V[6]+$\newline\hspace*{4.3em}$z^2(V[5]+z^2(V[4]+z^2(V[3]+z^2(V[2]+z^2(V[1]$\newline\hspace*{4.3em}$+z^2)))))))$
    	\Else
    		\State $ {U} \gets [5.9126262,\ 30.180142,\ 93.15558,\ 181.92853$\newline
    	    		\hspace*{3.85em} $214.38239,\ 122.60793]$
    	    \State $ {V} \gets [10.479857,\ 53.992907,\ 170.35400,\ 348.70392$\newline
    	    		\hspace*{3.85em} $457.33448,\ 352.73063,\ 122.60793]$
    	    \State $w_{\rm F} \gets (U[6]-{\rm i}z(U[5]-{\rm i}z(U[4]-{\rm i}z(U[3]-{\rm i}z(U[2]- {\rm i}z(U[1]$ \newline\hspace*{4.3em}$- {\rm i}z\sqrt{\uppi}))))))/ \
                (V[7]-{\rm i}z(V[6]-{\rm i}z(V[5]-{\rm i}z(V[4]$\newline\hspace*{4.3em}$-{\rm i}z(V[3]-{\rm i}z(V[2]-{\rm i}z(V[1]-{\rm i}z)))))))$
		\EndIf
	\end{algorithmic}
\end{algorithm}






\bibliographystyle{mnras}
\bibliography{/Users/ajshajib/mylib/STRIDES/Papers/Anowar/ajshajib}

\bsp	
\label{lastpage}

\end{document}